\theoremstyle{nonumberplain}
\newtheorem{proofsk}{Proof sketch.}
\newcommand{\F}{\ensuremath{\mathcal{F}}}
\newcommand{\bsize}[1]{\ensuremath{|\mathsf{b}({#1})|}}
\newcommand{\Gr}[1]{\ensuremath{\mathcal{G}_{#1}}}
\newcommand{\Span}{\ensuremath{\mathcal{S}}}
\newcommand{\N}{\ensuremath{\mathbb{N}}}
\newcommand{\Horn}{\ensuremath{\mathcal{H}}}
\newcommand{\mgu}{m.g.u.}
\newcommand{\res}{R}
\newcommand{\conn}{\ensuremath{\Horn^c}}
\newcommand{\connmost}[1]{\ensuremath{\conn_{#1,\infty}}}
\newcommand{\chain}{\ensuremath{\Horn^{2c}}}
\newcommand{\chainmost}[1]{\ensuremath{\chain_{#1,\infty}}}
\newcommand{\nocut}{\ensuremath{\Horn^{nr}}}
\newcommand{\cbase}{\ensuremath{C_{base}}}
\newcommand{\cext}[1]{\ensuremath{C_{ext{#1}}}}
\renewcommand{\dots}{..}
\newcommand{\appdx}{}
\newcommand{\conf}[1]{}
\newcommand{\techrep}[1]{#1}
\begin{document}

\title{SLD-Resolution Reduction of Second-Order Horn Fragments\\
  -- technical report --}
\author{Sophie Tourret\inst{1}
\and
Andrew Cropper\inst{2}
}

\institute{
Max Planck Institute for Informatics,
Saarland Informatics Campus, Germany
\email{sophie.tourret@mpi-inf.mpg.de}
\and
University of Oxford, UK\\
\email{andrew.cropper@cs.ox.ac.uk}
}

\maketitle
\begin{abstract}
  We present the \emph{derivation reduction} problem for SLD-resolution, the undecidable problem of finding a finite subset of a set of clauses from which the whole set can be derived using SLD-resolution.
  We study the reducibility of various fragments of second-order Horn logic with particular applications in Inductive Logic Programming.
  We also discuss how these results extend to standard resolution.
\end{abstract}

\setcounter{footnote}{0}
\section{Introduction}

Detecting and eliminating redundancy in a clausal theory (a set of clauses) is useful in many areas of computer science \cite{DBLP:journals/corr/abs-1002-4286,heule2015clause}.
Eliminating redundancy can make a theory easier to understand and may also have computational efficiency advantages \cite{crop:minmeta}.
The two standard criteria for redundancy are entailment \cite{liberatore1,liberatore2,mugg:progol} and subsumption \cite{plotkin:thesis,gottlob1993removing,buntine1988generalized}.
In the case of entailment, a clause $C$ is redundant in a clausal theory $T \cup \{C\}$ when $T \models C$.
In the case of subsumption, a clause $C$ is redundant in a clausal theory $T \cup \{C\}$ when there exists a clause $D\in T$ such that $D$ subsumes $C$.
For instance, consider the clausal theory $T_1$:
\begin{center}
  \begin{tabular}{l}
    $C_1 = p(x) \leftarrow q(x)$\\
    $C_2 = p(x) \leftarrow q(x), r(x)$
  \end{tabular}
\end{center}

\noindent
The clause $C_2$ is entailment and subsumption redundant because it is a logical consequence of $C_1$ (and is also subsumed by $C_1$).
However, as we will soon show, entailment and subsumption redundancy can be too strong for some applications. To overcome this issue, we introduce a new form of redundancy based on whether a clause is \emph{derivable} from a clausal theory using SLD-resolution \cite{sld-resolution}. Let $\vdash^*$ represent derivability in SLD-resolution. Then a Horn clause $C$ is {\em derivationally redundant} in a Horn theory $T \cup \{C\}$ when $T \vdash^* C$.
For instance, in $T_1$, although $C_1$ entails $C_2$, we cannot derive $C_2$ from $C_1$ using SLD-resolution because it is impossible to derive a clause with three literals from a clause with two literals.

We focus on whether theories formed of second-order function-free Horn clauses can be derivationally reduced to minimal (i.e.\ irreducible) finite theories from which the original theory can be derived using SLD-resolution. For instance, consider the following theory $T_2$, where the symbols $P_i$ represent second-order variables (i.e.\ variables that can be substituted by  predicate symbols):
\begin{center}
  \begin{tabular}{l}
    $C_1 = P_0(x) \leftarrow P_1(x)$\\
    $C_2 = P_0(x) \leftarrow P_1(x),P_2(x)$\\
    $C_3 = P_0(x) \leftarrow P_1(x),P_2(x),P_3(x)$\\
    \end{tabular}
\end{center}
Although $C_1$ subsumes $C_2$ and $C_3$, the two clauses cannot be derived from $C_1$ for the same reason as in the previous example.
However, $C_3$ is derivationally redundant because it can be derived by self-resolving $C_2$.
A minimal {\em derivation reduction} of $T_2$ is the theory $\{C_1,C_2\}$ because $C_2$ cannot be derived from $C_1$ and vice versa.

\subsection{Motivation}
Our interest in this form of redundancy comes from Inductive Logic Programming (ILP) \cite{mugg:ilp}, a form of machine learning which induces hypotheses from examples and background knowledge, where the hypotheses, examples, and background knowledge are represented as logic programs. Many forms of ILP \cite{raedt:clint,mugg:metagold,inoue:mil-hex,crop:metafunc,albarghouthi2017constraint,alps} and ILP variants \cite{evans2017learning,andreas,wang2014structure,rock:e2eprolog} use second-order Horn clauses as templates to denote the form of programs that may be induced. For instance, consider the father kinship relation:
\begin{center}
  \begin{tabular}{l}
    $father(A,B) \leftarrow parent(A,B), male(A)$.
  \end{tabular}
\end{center}
A suitable clause template to induce this relation is:
\begin{center}
  \begin{tabular}{l}
    $P_0(A,B) \leftarrow P_1(A,B),P_2(A)$.
  \end{tabular}
\end{center}

Determining which clauses to use for a given learning task is a major open problem in ILP \cite{crop:minmeta,mugg:metagold,crop:thesis}, and most approaches uses clauses provided by the designers of the systems without any theoretical justifications \cite{albarghouthi2017constraint,crop:metafunc,wang2014structure,inoue:mil-hex,rock:e2eprolog,andreas,alps}. The problem is challenging because on the one hand, you want to provide clauses sufficiently expressive to solve the given learning problem. For instance, it is impossible to learn the father relation using only monadic clauses. On the other hand, you want to remove redundant clauses to improve learning efficiency \cite{crop:minmeta}.

To illustrate this point, suppose you have the theory $T_3$:
\begin{center}
  \begin{tabular}{l}
    $C_1 = P_0(A,B) \leftarrow P_1(A,B)$\\
    $C_2 = P_0(A,B) \leftarrow P_1(A,B),P_2(A)$\\
    $C_3 = P_0(A,B) \leftarrow P_1(A,B),P_2(A,B)$\\
    $C_4 = P_0(A,B) \leftarrow P_1(A,B),P_2(A,B),P_3(A,B)$
  \end{tabular}
\end{center}
Running entailment reduction on $T_3$ would remove $C_2$, $C_3$, and $C_4$ because they are logical consequence of $C_1$.
But it is impossible to learn the intended father relation given only $C_1$.
By contrast, running derivation reduction on $T_3$ would only remove $C_4$ because it can be \emph{derived} by self-resolving $C_3$.
As this example illustrates, any clause removed by derivation reduction can be recovered by derivation if necessary, while entailment reduction can be too strong and remove important clauses with no way to get them back using SLD-resolution.
In this paper, we address this issue by studying the derivation reducibility of fragments of second-order Horn logic relevant to ILP.
Although our notion of derivation reduction can be defined for any proof system, we initially focus on SLD-resolution because (1) most forms of ILP learn definite logic programs (typically Prolog programs), and (2) we want to reduce sets of metarules, which are themselves definite clauses (although second-order rather than first-order).
The logic fragments we consider here also correspond to the search spaces typically targeted by ILP systems.

\subsection{Contributions}
Our main contributions are:
\begin{itemize}
\item We state the derivation reduction problem for SLD-resolution (Sect.\ \ref{sec:prob}) that we originally introduced in \cite{cropper2018derivation}.
\item We describe fragments of second-order Horn logic particularly relevant for ILP (Sect.\ \ref{sec:fragments}).
\item We show that, by constraining the arity of the predicates, an infinite fragment of connected Horn clauses can be derivationally reduced to a finite fragment made of clauses that contain at most two literals in the body (Sect.\ \ref{sec:connmost}).
\item We show that an infinite fragment of 2-connected (i.e. connected and without singleton occurrences of variables) Horn clauses \emph{cannot} be derivationally reduced to any finite fragments (Sect.\ \ref{sec:chainmost}).
\item We show similar but incomplete negative results for a more expressive 2-connected fragment (Sect.\ \ref{sec:chainmost-three}).
\item We extend the reducibility results to standard resolution (Sect.\ \ref{sec:res}).
\end{itemize}
\conf{
  A technical report including detailed proofs of all the results (including the ones only sketched in this paper) has been created as a separate document \cite{techrepjelia2018tc}.
}

\section{Related Work}

In clausal logic there are two main forms of redundancy: (1) a literal may be redundant in a clause, and (2) a clause may be redundant in a clausal theory.

\vspace{-1\baselineskip}
\subsubsection{Literal Redundancy.}
Plotkin \cite{plotkin:thesis} used subsumption to decide whether a literal is redundant in a first-order clause.
Joyner \cite{DBLP:journals/jacm/Joyner76} independently studied the same problem, which he called \emph{clause condensation}, where a condensation of a clause $C$ is a minimum cardinality subset $C'$ of $C$ such that $C' \models C$.
Gottlob and Ferm\"uller \cite{gottlob1993removing} showed that determining whether a clause is condensed is coNP-complete.
In contrast to eliminating literals from clauses, we focus on removing {\em clauses} from theories.
\vspace{-1\baselineskip}
\subsubsection{Clause Redundancy.}
Plotkin \cite{plotkin:thesis} also introduced methods to decide whether a clause is subsumption redundant in a first-order clausal theory.
The same problem, and slight variants, has been extensively studied in the propositional logic \cite{liberatore1,liberatore2} and has numerous applications, such as to improve the efficiency of SAT solving \cite{heule2015clause}.
This problem has also been extensively studied in the context of first-order logic with equality due to its application in superposition-based theorem proving \cite{hillenbrand2013search,weidenbach2010subterm}.
Langlois et al.\ \cite{langlois2009combinatorial} studied combinatorial problems for propositional Horn clauses.
Their results include bounds on entailment reduced sets of propositional Horn fragments.
In contrast to these works, we focus on removing {\em second-order} Horn clauses (without equality) that are {\em derivationally} redundant.

Much closer to this paper is the work of Cropper and Muggleton \cite{crop:minmeta}.
They used entailment reduction \cite{mugg:progol} on sets of second-order Horn clauses to identify theories that are (1) entailment complete for certain fragments of second-order Horn logic, and (2) minimal or irreducible, in that no further reductions are possible.
They demonstrate that in some cases as few as two clauses are sufficient to entail an infinite language.

In contrast to all these works, we go beyond entailment reduction and introduce {\em derivation reduction} because, as stated in the previous section, the former can be too strong to be of use in ILP.
Thus our focus is on {\em derivationally} reducing sets of {\em second-order Horn} clauses.

\vspace{-1\baselineskip}
\subsubsection{Theory Minimisation and Program Transformation.}
In theory minimisation \cite{DBLP:conf/ijcai/HemaspaandraS11} the goal is to find a minimum equivalent formula to a given input formula.
The fold/unfold transformations of first-order rules are used, e.g.\ to improve the efficiency of logic programs or to synthesise definite programs from arbitrary specifications \cite{sato1992equivalence}.
Both allow for the introduction of new formul\ae.
By contrast, the derivation reduction problem only allows for the {\em removal} of redundant clauses.

\vspace{-1\baselineskip}
\subsubsection{Prime Implicates.}
Implicates of a theory $T$ are the clauses entailed by $T$.
They are called prime when they do not themselves entail other implicates of $T$.
This notion differs from the redundancy elimination in this paper because (1) the notion of a prime implicate has been studied only in propositional, first-order, and some modal logics \cite{marquis2000consequence,echenim2015quantifierfree,bienvenu2007prime}, and (2) implicates are defined using entailment, which as already stated is too strong for our purpose.

\vspace{-1\baselineskip}
\subsubsection{Descriptive Complexity.}
Second-order Horn logic is often the focus in descriptive complexity \cite{descriptive-complexity}, which studies how expressive a logic must be to describe a given formal language.
For instance, Gr\"adel showed that existential second-order Horn logic can describe all polynomial-time algorithms \cite{gradel1991expressive}.
In this paper, we do not study the expressiveness of the logic but whether the logic can be logically reduced.

\vspace{-1\baselineskip}
\subsubsection{Higher-Order Calculi.}
SLD-resolution on second-order clauses, as used in this paper, supports the unification of predicate variables.
By contrast, there are extensions of SLD-resolution and standard resolution that handle the full expressivity of higher-order logic \cite{charalambidis2013extensional,huet1973mechanization}. These richer extensions handle more complex clauses, e.g.\ clauses including function symbols and $\lambda$-terms. We do not consider such complex clauses because most ILP approaches use second-order Horn clauses to learn function-free first-order Horn programs \cite{mugg:metagold,inoue:mil-hex,crop:metafunc,evans2017learning,albarghouthi2017constraint}.
Extending our results to full higher-order logic is left for future work.

\vspace{-1\baselineskip}
\subsubsection{Second-Order Logic Templates.}
McCarthy \cite{DBLP:conf/mi/McCarthy95} and Lloyd \cite{lloyd:logiclearning} advocated using second-order logic to represent knowledge. Similarly, in \cite{mlj:ilp20}, the authors argued for using second-order representations in ILP to represent knowledge. As mentioned in the introduction, many forms of ILP use second-order Horn clauses as a form of declarative bias \cite{raedt:decbias} to denote the structure of rules that may be induced. However, most approaches either (1) assume correct templates as input, or (2) use clauses without any theoretical justifications. Recent work \cite{crop:minmeta} has attempted to address this issue by reasoning about the completeness of these templates, where the goal is to identify finite sets of templates sufficiently expressive to induce all logic programs in a given fragment. Our work contributes to this goal by exploring the derivation redundancy of sets of templates.

\vspace{-1\baselineskip}
\subsubsection{Derivation Reduction.}
In earlier work \cite{cropper2018derivation} we introduced the derivation reduction problem and a simple algorithm to compute reduction cores.
We also experimentally studied the effect of using derivationally reduced templates on ILP benchmarks.
Whereas our earlier paper mainly focuses on the application of derivation reduction to ILP, the current paper investigates derivation reduction itself in a broader perspective, with more emphasis on whether infinite fragments can be reduced to finite subsets.
Another main distinction between the two papers is that here we focus on derivation reduction modulo first-order variable unification.
The overlap includes the definition of derivation reduction and Sect.\ 4.2 in \cite{cropper2018derivation} which covers in less detail the same topic as our Sect.\ \ref{sec:chainmost}.

\section{Problem Statement and Decidability}
\label{sec:prob}

We now define the derivation reduction problem, i.e. the problem of removing derivationally redundant clauses from a clausal theory.

\subsection{Preliminaries}
We focus on function-free second-order Horn logic.
We assume infinite enumerable sets of {\em term variables} $\{x_1$, $x_2$, $\dots$\} and {\em predicate variables} \{$P$, $P_0$, $P_1$, $\dots$\}.
An {\em atom} $P(x_{k_1},\dots,x_{k_a})$ consists of a predicate variable $P$ of arity $a$ followed by $a$ term variables.
A {\em literal} is an atom ({\em positive} literal) or the negation of an atom ({\em negative} literal).
A {\em clause} is a finite disjunction of literals.
A {\em Horn clause} is a clause with at most one positive literal.
From this point on, we omit the term \emph{Horn} because all clauses in the rest of the paper are Horn clauses ($\lambda$-free function-free second-order Horn clauses to be precise).
The positive literal of a clause $C$, when it exists, is its {\em head} and is denoted as $h(C)$.
The set of negative literals of $C$ is called its {\em body} and is denoted as $b(C)$.
The clause $C$ is written as $h(C)\leftarrow b(C)$.
We denote the empty clause as $\Box$.
We denote the number of literals occurring in $b(C)$ as $\bsize{C}$, i.e.\ the body size of $C$.
A {\em theory} $T$ is a set of clauses.

A {\em substitution} $\sigma$ is a function mapping term variables to term variables, and predicate variables to predicate variables with the same arity.
The application of a substitution $\sigma$ to a clause $C$ is written $C\sigma$.
A substitution $\sigma$ is a {\em unifier} of two literals when they are equal after substitution.
A substitution $\sigma$ is a {\em most general unifier} of two literals, denoted as \mgu, when no smaller substitution is also a unifier of the two literals, i.e.\ there exist no $\sigma'$ and $\gamma$ such that $\sigma'$ unifies the two literals and $\sigma = \sigma'\circ\gamma$.
The variables in a clause are implicitly universally quantified. In practice, ILP approaches typically use existentially quantified predicate variables \cite{mugg:metagold,crop:minmeta,albarghouthi2017constraint,evans2017learning}. However, we ignore the quantification of the predicate variables because we are not concerned with the semantics of the clauses, only their syntactic form.

\subsection{Derivation Reduction}

The derivation reduction problem can be defined for any proof system but we focus on SLD-resolution \cite{sld-resolution} because of the direct application to ILP. SLD-resolution is a restricted form of resolution \cite{robinson:resolution} based on linear resolution with two main additional constraints (1) it is restricted to Horn clauses, and (2) it does not use factors, where factoring unifies two literals in the same clause during the application of the resolution inference rule (this implies that all resolvents are binary resolvents).
SLD-resolution is usually defined for first-order logic.
To apply it to the second-order clauses in this paper, we replace the standard notion of a m.g.u.\ with the one defined in the previous paragraph that also handles predicate variables.
An SLD-resolution inference is denoted as $C_1,C_2\vdash C$ where the necessary m.g.u.\ is implicitly applied on $C$.
The clauses $C_1$ and $C_2$ are the {\em premises} and $C$ is the {\em resolvent} of the inference.
The literal being resolved upon in $C_1$ and $C_2$ is called the {\em pivot} of the resolution.
We define a function $S^n(T)$ of a theory $T$ as:
\begin{center}
  \begin{tabular}{l}
    $S^0(T) = T$\\
    $S^n(T) = \{C  | C_1 \in S^{n-1}(T),C_2 \in T$, s.t. $C_1,C_2\vdash C\}$
  \end{tabular}
\end{center}
The {\em SLD-closure} of a theory $T$ is defined as: $$S^*(T) = \bigcup\limits_{n\in\mathbb{N}}S^n(T)$$
A clause $C$ is {\em derivable} from the theory $T$, written $T \vdash^* C$, if and only if $C \in S^*(T)$.
Given a theory $T$, a clause $C\in T$ is {\em reducible} if it is the resolvent of an inference whose premises all belong to $T$ and have a body size smaller than $\bsize{C}$.
A clause $C$ is {\em redundant} in the theory $T \cup \{C\}$ if and only if $T\vdash^* C$.
By extension, a theory $T$ is {\em redundant} to another theory $T'\subseteq T$ if for all $C\in T$, $T'\vdash^*C$.
A theory is {\em reduced} if and only if it does not contain any redundant clauses.
We state the \emph{reduction problem}:
\begin{definition}[Reduction Problem]
  \label{def:reduction-problem}
  Given a possibly infinite theory $T$, the reduction problem is to find a finite theory $T' \subseteq  T$ such that (1) $T$ is redundant to $T'$, and (2) $T'$ is reduced.
  In this case, we say that $T'$ is a \emph{reduction core} of $T$.
\end{definition}
Note that in the case of a finite theory $T$, the existence of a reduction core is obvious since at worst it is $T$ itself.
However, for arbitrary theories it is impossible to compute or a reduction core because the derivation reduction problem is undecidable \cite{cropper2018derivation}.

\section{Fragments of Interest in \Horn}
\label{sec:fragments}

From Sect.\ \ref{sec:connmost} onwards we study whether derivationally reduced theories exist for various fragments of Horn logic.
Horn logic with function symbols has the expressive power of Turing machines and is consequently undecidable \cite{tarnlund:hornclause}, hence ILP approaches typically learn programs without function symbols \cite{ilp:book}, which are decidable \cite{datsin:datalog}.
We therefore focus on function-free Horn clauses.
We denote the set of all second-order function-free Horn clauses as \Horn.

We further impose syntactic restrictions on clauses in \Horn\, principally on the arity of the literals and on the number of literals in the clauses.
Let us consider a fragment \F\ of \Horn.
We write $\F_{a,b}$ to denote clauses in \F\ that contain literals of arity at most $a$ and clauses of body size at most $b$.
For example, the clause $P_0(x_1)\leftarrow P_1(x_2,x_3,x_4)$ is in $\Horn_{3,1}$.
When one of these restrictions is not imposed, the symbol $\infty$ replaces the corresponding number.
When restrictions are imposed on a fragment that is already restricted, the stricter restrictions are kept.
For example, $(\Horn_{4,1})_{3,\infty}=\Horn_{3,1}=\Horn_{4,1}\cap\Horn_{3,\infty}$.
We rely on the body size restriction to bound the reduction cores of the studied fragments.

We also constrain the fragments so that they are defined modulo variable renaming and so that only the most general clauses up to variable unification are considered.
Let $C$ be a clause verifying the syntactic restrictions of a given fragment $\F$.
Then there exists a clause $C_\F\in \F$ such that $C_\F\sigma=C$ for some substitution $\sigma$.
The motivation behind this restriction is that SLD-resolution only applies \mgu s and not any unifiers but some clauses like $C'$ may need more specific unifiers to be generated and can thus be unreachable by SLD-resolution.
This is not restrictive because up to variable renaming any such $C'$ can be obtained from $C$ by renaming and unifying variables.
\begin{definition}[Reducible fragment]
  A fragment \F\ of \Horn\ is reducible to $\F_{\infty,b}$ when, for all $C\in\F$ such that $b<\bsize{C}$, there exists $b'<\bsize{C}$ such that $\F_{\infty,b'}\vdash C$, i.e.\ $C$ is the resolvent of an inference with premises in $\F_{\infty,b'}$.
\end{definition}
The following results are consequences of this definition and of the reduction problem statement.
\begin{proposition}[Reduciblility]
  \label{prop:freduc}
  If a fragment \F\ is reducible to $\F_{\infty,b}$ then \F\ is redundant to $\F_{\infty,b}$.
\end{proposition}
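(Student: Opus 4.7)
The plan is to prove $\F_{\infty,b}\vdash^* C$ for every $C\in\F$ by strong induction on the body size $\bsize{C}$. The base case $\bsize{C}\le b$ is immediate: by definition of the fragment restriction, $C$ itself lies in $\F_{\infty,b}=\sld^0(\F_{\infty,b})\subseteq \sld^*(\F_{\infty,b})$, so $\F_{\infty,b}\vdash^* C$.

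For the inductive step $\bsize{C}>b$, I would invoke reducibility to obtain some $b'<\bsize{C}$ and two clauses $C_1,C_2\in\F_{\infty,b'}$ whose resolvent is $C$. Since $\bsize{C_i}\le b'<\bsize{C}$, the induction hypothesis gives $\F_{\infty,b}\vdash^* C_1$ and $\F_{\infty,b}\vdash^* C_2$. All that is left is to combine these two SLD-derivations with the final inference $C_1,C_2\vdash C$ into a single SLD-derivation of $C$ from $\F_{\infty,b}$.

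The hard part is exactly this last combination. The definition of $\sld^n$ is linear: at each step one premise must come from the base theory and not from an earlier round of resolution, so two arbitrary sub-derivations cannot simply be concatenated. To bridge the gap I would first establish an auxiliary lemma stating that $\sld^*(T)$ is closed under one-step SLD-resolution, i.e.\ if $A,B\in\sld^*(T)$ and $A,B\vdash D$, then $D\in\sld^*(T)$. This lemma goes by induction on the least $n$ with $B\in\sld^n(T)$: the case $n=0$ is direct from the definition of $\sld^{k+1}(T)$, and for $n\ge 1$, writing $B$ as the resolvent of some $B'\in\sld^{n-1}(T)$ with some $E\in T$, the pivot literal on which $A$ resolves in $B$ must descend either from $B'$ or from $E$. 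A standard permutation-of-resolutions argument then lets me first resolve $A$ against that ancestor and perform the other resolution afterwards, either placing $D$ directly in some $\sld^{k}(T)$ (when $E$ was the ancestor) or strictly reducing the depth $n$ so that the inner induction hypothesis applies (when $B'$ was the ancestor).

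With the lemma in hand, the inductive step of the main proof closes at once: $C_1,C_2\in\sld^*(\F_{\infty,b})$ together with $C_1,C_2\vdash C$ yields $C\in\sld^*(\F_{\infty,b})$, i.e.\ $\F_{\infty,b}\vdash^* C$, as required. The subtle point to double-check while carrying out the permutation is that it still works with the extended m.g.u.s of the paper, which bind predicate variables as well as term variables; this should go through because m.g.u.s compose cleanly and SLD-resolution (being factoring-free binary resolution) never combines two literals of the same clause, so swapping the order of the two resolutions does not interfere with the computation of the unifiers.
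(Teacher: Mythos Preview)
Your overall strategy is sound and, in fact, more careful than the paper itself, which states Prop.~\ref{prop:freduc} without proof as an immediate ``consequence of the definitions''. You are right to notice that the paper's $S^n$ is \emph{input-linear} (at each step one premise must lie in the base theory), so the naive strong induction on $\bsize{C}$ does not close by itself: once the two premises $C_1,C_2$ produced by reducibility are only known to lie in $S^*(\F_{\infty,b})$ rather than in $\F_{\infty,b}$, one genuinely needs the closure lemma you isolate. The paper glosses over this.

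There is one small slip in your sketch of that lemma. In the branch where the pivot literal in $B$ descends from the input clause $E\in T$, you claim that after first resolving $A$ against $E$ the clause $D$ lands ``directly in some $S^k(T)$''. Not quite: resolving $A\in S^*(T)$ with $E\in T$ only gives an \emph{intermediate} clause $G\in S^*(T)$; you must then resolve $G$ against $B'\in S^{n-1}(T)$, and that second step still requires the inner induction hypothesis. Symmetrically, when the pivot descends from $B'$ you invoke the IH first (on $A,B'$) and then finish with the input clause $E$. So both branches use the IH, just at different points. With this correction the argument goes through: the permutation of the two binary Horn resolutions is unproblematic here because the second pivot survived into $B$ and is therefore distinct from the first, and since SLD is factoring-free the two steps commute with the composed m.g.u.s (including the predicate-variable bindings) agreeing.
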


\begin{theorem}[Cores of Reducible Fragments]
  If a fragment \F\ is reducible to $\F_{\infty,b}$ then the solutions of the reduction problem for \F\ and $\F_{\infty,b}$ are the same, i.e. the reduction cores of \F\ and $\F_{\infty,b}$ are the same.
\end{theorem}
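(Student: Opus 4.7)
The plan is to prove set equality of reduction cores by mutual inclusion.

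The easier direction shows that any reduction core $T'$ of $\F_{\infty,b}$ is a reduction core of $\F$. Since $\F_{\infty,b}\subseteq\F$, such a $T'$ is automatically a finite subset of $\F$ that is reduced; to check the remaining redundancy condition, I would chain the two redundancies using transitivity of $\vdash^*$ (equivalently, idempotence of $S^*$): Proposition~\ref{prop:freduc} gives that $\F$ is redundant to $\F_{\infty,b}$, and $\F_{\infty,b}$ is redundant to $T'$ by hypothesis, so $\F$ is redundant to $T'$.

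The harder direction shows that any reduction core $T'$ of $\F$ is a reduction core of $\F_{\infty,b}$. The conditions that $T'$ is finite and reduced carry over unchanged, and $\F_{\infty,b}\subseteq\F\subseteq S^*(T')$ gives that $\F_{\infty,b}$ is redundant to $T'$. What remains---and is the crux of the proof---is to establish $T'\subseteq\F_{\infty,b}$. I would argue by contradiction: pick $C\in T'$ of minimum body size such that $\bsize{C}>b$, and show that $C$ is redundant in $T'$. By the definition of reducibility applied to $\F$, $C$ is the one-step SLD-resolvent of two clauses $C_1,C_2\in\F_{\infty,b'}$ with $b'<\bsize{C}$; since $C_1,C_2\in\F$ and $\F\subseteq S^*(T')$, both are derivable from $T'$. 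If these derivations can be realized using only clauses of $T'\setminus\{C\}$, then one further resolution step yields $T'\setminus\{C\}\vdash^* C$, contradicting that $T'$ is reduced.

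The main obstacle lies precisely in removing $C$ from the derivations of $C_1$ and $C_2$. The key arithmetical identity is that an SLD-resolvent satisfies $|b(\text{resolvent})|=|b(\text{left})|+|b(\text{right})|-1$, so any step involving $C$ in a derivation of $C_i$ produces an intermediate clause of body size at least $\bsize{C}>\bsize{C_i}$, which could only be brought back down by subsequent unit-clause resolutions. By the minimality of $\bsize{C}$, every clause in $T'$ of body size strictly less than $\bsize{C}$ already lies in $\F_{\infty,b}$, so such ``peaks'' through $C$ can be cut out of the linear SLD-derivation, leaving a derivation of $C_i$ using only clauses of $T'\cap\F_{\infty,b}\subseteq T'\setminus\{C\}$. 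Carrying out this derivation restructuring rigorously is the delicate step of the argument.
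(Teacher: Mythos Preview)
The paper does not actually give a proof of this theorem; it merely states the proposition and the theorem together as ``consequences of this definition and of the reduction problem statement,'' with no further argument. So there is nothing in the paper to compare your attempt against line by line. Your proposal is therefore considerably more detailed than anything the authors wrote, and your identification of the two directions---with the easy direction via Proposition~\ref{prop:freduc} and transitivity of $S^*$, and the hard direction being the inclusion $T'\subseteq\F_{\infty,b}$---is exactly right.

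That said, the peak-cutting step you flag as ``delicate'' is genuinely incomplete as written, and it is not clear the argument goes through without more work. Two issues: First, you conclude that the restructured derivation uses only clauses of $T'\cap\F_{\infty,b}$, but minimality of $\bsize{C}$ only tells you that clauses in $T'$ of body size strictly below $\bsize{C}$ lie in $\F_{\infty,b}$; it says nothing about other clauses in $T'$ of body size $\geq\bsize{C}$, which could still appear in the derivation of $C_i$. You only need to excise $C$ itself, not all large clauses, so this over-claim is harmless but confusing. Second, and more substantively, you assert that a peak through $C$ ``can be cut out'' once you know the descent is via unit resolutions, but you give no mechanism for the replacement. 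Resolving $D_j$ with $C$ and then unit-resolving away all of $b(C)$ yields $D_j$ with the pivot literal deleted; to replicate this directly you would need a unit clause in $T'$ of the arity of $h(C)$, and nothing guarantees that. The clean case is when the fragment contains \emph{no} unit clauses (e.g.\ $\chain$): then body size is monotone along any linear derivation, so no clause of body size $\geq\bsize{C}$ can appear at all in a derivation of $C_i$, and you are done immediately. In fragments like $\conn$ that admit unit clauses, the restructuring needs a real argument that you have not supplied.
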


Because we are motivated by applications in ILP, we focus on connected clauses \cite{crop:minmeta,ilp:book,albarghouthi2017constraint,evans2017learning,inoue:mil-hex}:
\begin{definition}[Connected Fragment]
  \label{def:connected}
  A clause is connected if the literals in the clause cannot be partitioned into two non-empty sets such that the variables appearing in the literals of one set are disjoint from the variables appearing in the literals of the other set. The {\em connected} fragment, denoted as \conn, is the subset of \Horn\ where all clauses are connected.
\end{definition}
\begin{example}
  \label{ex:conn}
  The clause $C_1=P_0(x_1,x_2)\leftarrow P_1(x_3,x_1),$ $P_2(x_2),$ $P_3(x_3)$ is in \conn, but the clause $C_2=P_0(x_1,x_2)\leftarrow P_1(x_3,x_4),$ $P_2(x_2),P_3(x_3)$ is not because none of the variables in $P_0$ and $P_2$ ($x_1$ and $x_2$) appear in $P_1$ and $P_3$ and vice versa.
\end{example}

A stricter version of connectedness, denoted here as 2-connected\-ness, describes the fragment that is used the most in ILP \cite{crop:minmeta}.
It essentially eliminates singleton variables.
\begin{definition}[2-Connected Fragment]
  The {\em 2-connected} fragment, denoted as \chain, is the subset of \conn\ such that all the term variables occur at least twice in  distinct literals.
  In this context, a term variable that does not follow this restriction is denoted as {\em pending}.
\end{definition}
\begin{example}
  The clause $C_1$ from Example \ref{ex:conn} is in \chain\ because $x_1$ is in $P_0$ and $P_1$, $x_2$ is in $P_0$ and $P_2$, and $x_3$ is in $P_1$ and $P_3$.
  By contrast, the clause $C_3=P_0(x_1,x_2)\leftarrow P_1(x_3,x_1),P_2(x_1),P_3(x_3)$ is in \conn\ but not in \chain\ because $x_2$ only occurs once and is thus pending.
\end{example}
Note that the simple syntactic restrictions can be combined with both connectedness and 2-connectedness.
In the following sections we consider the reduction problem for \conn\ (Sect.\ \ref{sec:connmost}), $\chainmost{2}$ (Sect.\ \ref{sec:chainmost}), and $\chainmost{3}$ (Sect.\ \ref{sec:chainmost-three}).

\section{The Fragment \conn\ is Reducible to $\conn_{\infty,2}$}
\label{sec:connmost}

We now study whether certain fragments can be reduced.
Our first focus is on the fragment \conn which contains all connected clauses.
We are primarily interested in whether this fragment can be reduced using SLD-resolution to a minimal fragment, preferably with only two literals in the body ($\conn_{\infty,2}$).

\subsection{Graph Encoding}
To prove the reducibility of \conn{} we consider \connmost{a} for any $a\in\N^*$ and show that it can be reduced to $\conn_{a,2}$.
To reduce all clauses in \connmost{a} of body size greater than two, we rely on the following graph encoding to create connected premises to infer $C$.
We assume reader familiarity with basic notions of graph theory, in particular, notions of \techrep{{\em circuits} and {\em length} of circuits, }{\em spanning trees}, {\em connected graphs}, {\em degree} of vertices and {\em outgoing} edges (from a set of vertices).

\begin{definition}[Graph Encoding]
  \label{def:graph}
  Let $C$ be a clause in \connmost{m}.
  The undirected graph \Gr{C} is such that:
  \begin{itemize}
  \item There is a bijection between the vertices of \Gr{C} and the predicate variable occurrences in $C$ (head and body).
  \item There is an edge in \Gr{C} between each pair of vertices for each corresponding pair of literals that share a common term variable.
    The edge is labeled with the corresponding variable.
  \end{itemize}
\end{definition}
\begin{example}
  $C=P_0(x_1,x_2)\leftarrow$ $P_2(x_1,x_3,x_4),$ $P_3(x_4),$ $P_4(x_2,x_5),$ $P_1(x_5,x_6)$ is mapped to \Gr{C} as illustrated in Fig.\ \ref{fig:graph-example}.
  Note that since the variables $x_3$ and $x_6$ occur only in $P_2$ and $P_1$ respectively, they are not present in \Gr{C}.
  In fact \Gr{C} also represents many other clauses, e.g. $P_1(x_5,x_5)\leftarrow P_0(x_2,x_1),$ $P_2(x_4,x_3,x_1),P_3(x_4),$ $P_4(x_2,x_5)$.
\end{example}
\begin{figure}[t]
  \centering
  \includegraphics[scale=.9]{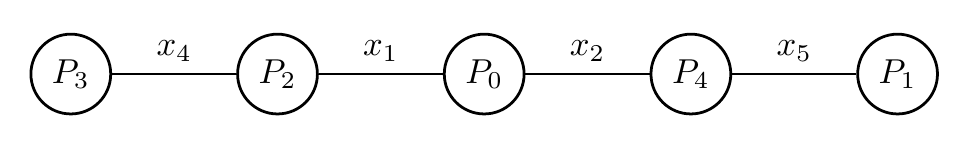}
\caption{Encoding of $C=P_0(x_1,x_2)\leftarrow$ $P_2(x_1,x_3,x_4),P_3(x_4),P_4(x_2,x_5),P_1(x_5,x_6)$ where vertices correspond to literals and edges represent variables shared by two literals}
\label{fig:graph-example}
\end{figure}

This graph encoding allows us to focus on connectivity, as stated in the following proposition.
\begin{proposition}
  Let $C\in\Horn$.
  The graph \Gr{C} is connected if and only if $C\in\conn$.
\end{proposition}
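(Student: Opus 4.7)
The plan is to prove both directions by contraposition, since the two definitions are essentially dual formulations of the same property: the clause-level definition forbids a partition of literals into parts with disjoint variables, while the graph-level definition forbids a partition of vertices into parts with no edges across. Throughout, I will freely use the bijection $\pi$ from Definition \ref{def:graph} between vertices of $\Gr{C}$ and literal occurrences of $C$.

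For the ($\Rightarrow$) direction, I would show that if $C \notin \conn$ then $\Gr{C}$ is disconnected. Let $L_1, L_2$ be a partition of the literals of $C$ into two non-empty sets whose term variables are disjoint, and let $V_i = \pi^{-1}(L_i)$. Then $V_1, V_2$ is a partition of the vertices of $\Gr{C}$ into two non-empty sets. By Definition \ref{def:graph}, every edge of $\Gr{C}$ is labeled by a term variable shared between the two incident literals; since no variable is shared between an element of $L_1$ and one of $L_2$, no edge of $\Gr{C}$ can cross the partition $V_1, V_2$. Hence $\Gr{C}$ is disconnected.

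For the ($\Leftarrow$) direction, I would argue contrapositively that if $\Gr{C}$ is disconnected then $C \notin \conn$. Pick a partition $V_1 \sqcup V_2$ of the vertex set witnessing disconnection, both non-empty and with no edges between them, and set $L_i = \pi(V_i)$. Both $L_i$ are non-empty by the bijection and together they partition the literals of $C$. Suppose for contradiction that some term variable $x$ appears both in a literal $\ell_1 \in L_1$ and a literal $\ell_2 \in L_2$. Then Definition \ref{def:graph} requires an edge labeled $x$ between $\pi^{-1}(\ell_1) \in V_1$ and $\pi^{-1}(\ell_2) \in V_2$, contradicting the absence of crossing edges. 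So the variables in $L_1$ and $L_2$ are disjoint, which shows $C \notin \conn$.

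The argument contains no real obstacle — it is just an unpacking of the two definitions through the bijection $\pi$ — but one should briefly note the degenerate case of a clause consisting of a single literal (just a head, or a body with one literal): here $\Gr{C}$ has a single vertex and is connected by the usual graph-theoretic convention, and the clause is vacuously in \conn\ because no partition into two non-empty literal sets exists. Both implications thus hold trivially in this case, so the two arguments above cover all clauses in $\Horn$.
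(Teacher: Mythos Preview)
Your proof is correct. The paper does not actually supply a proof for this proposition; it is stated without argument and immediately paraphrased as saying that clause connectedness coincides with graph connectedness under the encoding of Def.~\ref{def:graph}. Your contrapositive unpacking in both directions via the literal--vertex bijection is exactly the natural argument the authors are implicitly relying on, and your remark on the single-literal degenerate case is a nice piece of hygiene that the paper omits.
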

In other words, the notion of connectedness that we introduced for clauses in Def.\ \ref{def:connected} is equivalent to graph connectedness when encoding the clauses in graph form using Def.\ \ref{def:graph}.
Because we are only interested in connected clauses, we only handle connected graphs.

\subsection{Reducibility of \conn}

\techrep{The proofs in this section assume that in a clause (1) no two literals share more than one variable, and (2) no predicate variable occurs more than once.
Condition (1) allows us to identify edges with pairs of vertices without care for their label since there is then at most one edge between two vertices, as is the case in standard unlabeled graphs.
The clauses not verifying this condition are less general than the clauses that do.
For example, a connected clause $C$, containing the literals $P_1(x_1,x_2,\dots)$ and $P_2(x_1,x_2,\dots)$, can be obtained from the connected clause $C'$ equal to $C$, except that the literals $P_2(x_1,x_2,\dots)$ is replaced with $P_2(x_1,x_3,\dots)$ where $x_3$ is a variable not occurring in $C$, by using the substitution that maps $x_3$ to $x_2$.
Condition (2) allows us to name vertices with the predicates they represent.
Clauses that do not respect this criterion can also be produced by unifying variables.
These two constraints stem directly from our working modulo variable unification.}

Proposition \ref{prop:adj} is the main intermediary step in the proof of reducibility of the connected fragment (Th.\ \ref{th:connmostred}).
\conf{A detailed proof of this result is available in the technical report version of this paper \cite{techrepjelia2018tc}.}
\techrep{
The following theorem is a reminder of a classical result in graph theory that is used in the subsequent proof.
\begin{theorem}[Th.\ 2 from \cite{dirac1952some}]
  \label{\appdx th:dirac}
  A finite graph in which the degree of every vertex is at least $d(>1)$ contains a circuit of length at least $d+1$.
\end{theorem}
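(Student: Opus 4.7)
The plan is to apply the classical longest-path argument. First I would fix a path of maximum length in the graph, say $v_0, v_1, \ldots, v_k$, which exists because the graph is finite. By maximality, every neighbor of the endpoint $v_0$ must already lie on this path: if $v_0$ had a neighbor $u$ not on the path, then $u, v_0, v_1, \ldots, v_k$ would be a strictly longer path, contradicting the choice of $P$.

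Next I would exploit the degree hypothesis at $v_0$. Since the degree of $v_0$ is at least $d$, it has at least $d$ distinct neighbors, and by the previous step each of them is one of $v_1, \ldots, v_k$. Let $i$ be the largest index such that $v_i$ is adjacent to $v_0$. As the $d$ neighbors of $v_0$ correspond to $d$ distinct indices in $\{1, \ldots, k\}$, the maximum of these indices satisfies $i \geq d$. Closing the prefix $v_0, v_1, \ldots, v_i$ with the edge $v_i v_0$ then yields the circuit $v_0, v_1, \ldots, v_i, v_0$ of length $i + 1 \geq d + 1$, which is exactly the required circuit.

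The main subtlety, rather than a real obstacle, is that the statement does not assume connectedness, whereas the ``longest path'' argument implicitly lives in one connected component. This is harmless here: since every vertex of the whole graph has degree at least $d$, the same lower bound holds inside any connected component, so I can apply the argument above in any one component and lift the resulting circuit back to the original graph. The hypothesis $d > 1$ is used only to ensure that the circuit produced has length at least $3$ (rather than a degenerate length-$2$ walk along a single edge), which is necessary for it to count as a circuit in the usual simple-graph sense.
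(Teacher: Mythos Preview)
Your argument is correct and is the standard longest-path proof of this classical result. Note, however, that the paper does not actually prove this theorem: it is merely quoted as ``a reminder of a classical result in graph theory'' (citing Dirac's 1952 paper) and then invoked in the proof of Proposition~\ref{\appdx prop:adj}. So there is no paper-side proof to compare against; your proposal simply supplies the omitted justification, and it does so correctly.
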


\begin{proposition}
  \label{\appdx prop:span}
  Let \Gr{} be a connected graph containing a circuit of length 3.
  If \Span\ is a spanning tree of \Gr{} containing two edges of the circuit then replacing in \Span\ one of these edges by the non-used one from the circuit yields another spanning tree of \Gr{}.
\end{proposition}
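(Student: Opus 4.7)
The plan is to use the classical tree-exchange argument: removing an edge from a spanning tree splits it into exactly two components, and adding any edge of $\mathcal{G}$ that reconnects those two components yields another spanning tree.

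Concretely, label the triangle as vertices $u, v, w$ with edges $e_1 = \{u,v\}$, $e_2 = \{v,w\}$, $e_3 = \{u,w\}$, and suppose $\mathcal{S}$ contains $e_1$ and $e_2$ (the remaining case is symmetric). I want to show that $\mathcal{S}' := (\mathcal{S} \setminus \{e_1\}) \cup \{e_3\}$ is again a spanning tree of $\mathcal{G}$. First I would invoke the standard fact that deleting an edge from a tree disconnects it into exactly two connected components; call these $\mathcal{S}_u$ and $\mathcal{S}_v$, containing $u$ and $v$ respectively. Because $e_2 = \{v,w\}$ belongs to $\mathcal{S} \setminus \{e_1\}$, the vertex $w$ lies in the same component as $v$, i.e.\ $w \in \mathcal{S}_v$. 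Adding the edge $e_3 = \{u,w\}$ therefore joins $\mathcal{S}_u$ and $\mathcal{S}_v$ across distinct components, so $\mathcal{S}'$ is connected and spans all vertices of $\mathcal{G}$.

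To finish, I would verify that $\mathcal{S}'$ is acyclic. This follows either by counting (it has the same number of edges as $\mathcal{S}$, namely $|V(\mathcal{G})| - 1$, and is connected, hence a tree), or directly: any cycle in $\mathcal{S}'$ would have to use the new edge $e_3$, which would imply the existence of a $u$-$w$ path in $\mathcal{S} \setminus \{e_1\}$, contradicting the fact that $u$ and $w$ lie in different components of $\mathcal{S} \setminus \{e_1\}$. The symmetric case, swapping $e_2$ for $e_3$, is identical with the roles of $u$ and $w$ exchanged.

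There is no real obstacle here; the only point requiring care is the bookkeeping that identifies which of the two components of $\mathcal{S} \setminus \{e_1\}$ contains $w$, and this is pinned down immediately by the presence of $e_2$ in $\mathcal{S}$. The result is essentially a special case of the matroid exchange property for graphic matroids, specialised to the triangle appearing in the hypothesis.
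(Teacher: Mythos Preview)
Your argument is correct and follows essentially the same approach as the paper: both show that the swapped graph is connected and acyclic by elementary means. The only cosmetic difference is that the paper establishes connectivity by explicitly rerouting paths (inserting the third triangle vertex between the two endpoints of the removed edge), whereas you invoke the two-component decomposition of $\mathcal{S}\setminus\{e_1\}$ and reconnect via $e_3$; the acyclicity arguments are virtually identical.
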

\begin{proof}
  Let \Span' be \Span\ after the replacement described in the proposition and $v_1$, $v_2$ and $v_3$ be the vertices in the circuit of length 3.
  We assume w.l.o.g.\ that the edges $(v_1,v_2)$ and $(v_2,v_3)$ belong to \Span\ and that the edge $(v_2,v_3)$ is replaced by $(v_1,v_3)$ in \Span'.
  Wherever there exists a path between two vertices in \Span, there also exists a path between them in \Span':
  \begin{itemize}
  \item if the path doesn't go through $(v_2,v_3)$ in \Span\ then the same path also exists in \Span',
  \item if the path goes through $(v_2,v_3)$ in \Span\ then the same path where $v_1$ is inserted between all contiguous occurrences of $v_2$ and $v_3$ exists in \Span'.
  \end{itemize}
  Let us assume the existence of a circuit in \Span'.
  Since \Span\ is a spanning tree, the circuit in \Span' necessarily go through $(v_1,v_3)$.
  Let us consider a path from $v_1$ to itself in \Span'.
  Then by inserting $v_2$ in any contiguous occurrences of $v_1$ and $v_3$, a path from $v_1$ to itself in \Span\ is obtained, a contradiction.
\end{proof}
}
\begin{proposition}[Spanning Tree]
  \label{\appdx prop:adj}
  For any clause $C\in\connmost{a}$, $a\in\N^*$, there exists a spanning tree of \Gr{C} in which there exist two adjacent vertices such that the number of edges outgoing from this pair of vertices is at most $a$.
\end{proposition}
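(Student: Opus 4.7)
The plan is to start from an arbitrary spanning tree $T$ of $\Gr{C}$ and iteratively repair it via triangle swaps, as licensed by the preceding proposition, until some leaf of $T$ has a parent of small tree degree. Pick any leaf $v_1$ of $T$ (one exists since $\Gr{C}$ is nontrivial) and let $v_2$ be its unique $T$-neighbor. Since $v_1$ has no other $T$-neighbor, the number of tree edges leaving $\{v_1,v_2\}$ is exactly $\deg_T(v_2)-1$, so the goal reduces to producing a spanning tree for which $\deg_T(v_2) \leq a+1$.

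Suppose $\deg_T(v_2) \geq a+2$. Then $v_2$ has at least $a+1$ tree-neighbors distinct from $v_1$. The literal $v_2$ carries at most $a$ term variables, and every edge at $v_2$ in $\Gr{C}$ (in particular every tree edge) is labeled by one of them. By the pigeonhole principle, two of these $\geq a+1$ tree-neighbors of $v_2$, say $u_i$ and $u_j$, meet $v_2$ along edges sharing a common label $x$. Then both $u_i$ and $u_j$ contain $x$, so $(u_i,u_j)$ is an edge of $\Gr{C}$, and $\{v_2,u_i,u_j\}$ forms a circuit of length three. The triangle-swap proposition then allows us to replace $(v_2,u_i)$ in $T$ by $(u_i,u_j)$, producing a new spanning tree $T'$ in which $v_1$ is still a leaf with parent $v_2$ (we left $(v_1,v_2)$ untouched) and $\deg_{T'}(v_2) = \deg_T(v_2)-1$.

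Iterating this swap strictly decreases $\deg_T(v_2)$ and so terminates with $\deg_T(v_2) \leq a+1$, giving at most $a$ outgoing edges from $\{v_1,v_2\}$ in the final spanning tree. The main subtlety is to keep $v_1$ a leaf throughout the iteration, which is why we require the threshold $\deg_T(v_2) \geq a+2$ rather than merely $\geq a+1$ before swapping: it guarantees that the pigeonhole collision can be extracted among the tree-neighbors of $v_2$ that differ from $v_1$, so the edge $(v_1,v_2)$ is never the one being swapped out.
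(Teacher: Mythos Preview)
Your argument is correct and follows essentially the same strategy as the paper: pick a leaf $v_1$, examine its unique neighbour $v_2$, and while $\deg_T(v_2)$ is too large, use a pigeonhole collision among the edge labels at $v_2$ to find a triangle and apply the spanning-tree swap (the paper's Prop.~\ref{\appdx prop:span}) to lower $\deg_T(v_2)$. The one notable difference is that your pigeonhole step is cleaner: by applying pigeonhole directly to the $\geq a+1$ tree-neighbours of $v_2$ \emph{other than} $v_1$, you immediately obtain two such neighbours sharing a label, whereas the paper applies pigeonhole to all $\geq a+2$ edges at $v'$ and then performs a two-case analysis (one label on three edges versus two labels on two edges each) solely to ensure the swapped edge avoids $v$. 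Your shortcut eliminates that case split without losing anything.
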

\conf{
\begin{proofsk}
  Assuming no such pair of vertices exists in any spanning tree of \Gr{C}, we show in a case analysis that it is always possible to transform a spanning tree into another one where such a pair exists, a contradiction.
\end{proofsk}
}
\techrep{
\begin{proof}
  By contradiction, let us assume that no such pair of vertices exists in any spanning tree.
  Due to Th.\ \ref{\appdx th:dirac} there exists at least one vertex $v$ of degree 1 in the spanning tree, because by definition it has no circuit.
  The vertex $v'$ adjacent to $v$ is thus of degree at least $a+2$, so that there are at least $a+1$ outgoing edges from the pair $(v,v')$.
  Among the $a+2$ edges outgoing from $v'$, at least two are labeled with the
  same term variable(s) as some other edge(s).
  We want to remove one of these edges and replace it with an edge not connected to $v'$.
  These two edges may or may not be labeled with the same variable, thus there are two cases to examine:
  \begin{enumerate}
  \item there is one variable that is the label of at least three distinct edges outgoing from $v'$, thus connecting $v'$ with at least three other distinct vertices, or
  \item there are two distinct variables such that each one is the label of two distinct edges outgoing from $v'$, thus connecting two vertices to $v'$, the four such vertices being distinct.
  \end{enumerate}

  In the first case, since there are at least three distinct vertices connected to $v'$, at least two are distinct from $v$.
  We call $w$ and $w'$ these two vertices.
  Since both $w$ and $w'$ are connected to $v'$, there is no edge between them in the spanning tree, or it would have a circuit.
  However, note that the edge $(w,w')$ belongs to \Gr{C} because the corresponding literals share a common variable.
  Let us consider the same spanning tree where the edge $(w,v')$ has been replaced by the edge $(w,w')$.
  This new graph is also a spanning tree of \Gr{C} by Prop.\ \ref{\appdx prop:span}.

  In the second case, among the two pairs of vertices previously identified, we consider the pair of vertices $\{w,w'\}$ such that both are distinct from $v$.
  Since the four vertices are distinct, one such pair necessarily exists.
  As with the first case, since $w$, $w'$ and $v'$ share a common variable, it is possible to swap the edge $(w,v')$ with $(w,w')$ in the spanning tree and as in the first case, Prop.\ \ref{\appdx prop:span} guarantees that the obtained graph is also a spanning tree of \Gr{C}.

  In both cases, this operation reduces the number of edges outgoing from the pair $v$, $v'$ by one.
  This process can be repeated until this number reaches $a$, since the conditions to apply the transformation hold as long as the degree of $v'$ is greater than $a$.
  This contradicts our initial assumption.
\end{proof}
}

The main result of this section is the next theorem stating that any connected fragment of constrained arity has a reduction core containing clauses of body size at most two.

\begin{theorem}[Reducibility of \connmost{a}]
  \label{th:connmostred}
  For any $a\in\N^*$, \connmost{a} is reducible to $\conn_{a,2}$.
\end{theorem}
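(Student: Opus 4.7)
The plan is to leverage Proposition~\ref{prop:adj} to reduce any clause $C\in\connmost{a}$ with $\bsize{C}=n\geq 3$ by cutting it along a well-chosen pair of vertices, then gluing the two halves with a fresh predicate variable that the SLD-resolution step will cancel. First, I fix a spanning tree of $\Gr{C}$ and an adjacent pair $(v,v')$ whose outgoing-edge count is at most $a$. Then I define $\bar{x}$ as the set of term variables occurring both in the literals at $\{v,v'\}$ and in some other literal of $C$: every such variable labels at least one outgoing edge of the pair, so $|\bar{x}|\leq a$. Finally I introduce a fresh predicate variable $Q$ of arity $|\bar{x}|$, whose atom $Q(\bar{x})$ will be the pivot of the reconstructing resolution step.

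Writing $L_{v,v'}$ for the literals at $v$ and $v'$ and $L_{\mathrm{rest}}$ for the others, I will assemble two premises by placing $L_{v,v'}$ on one side of the cut and $L_{\mathrm{rest}}$ on the other, inserting $Q(\bar{x})$ as the head on one side and as a body literal on the other, with $h(C)$ staying in whichever part originally contained it. This yields premises of body sizes $2$ and $n-1$, both strictly less than $n$, with all predicates (including $Q$) of arity at most $a$. Resolving on the pivot $Q(\bar{x})$ with the identity m.g.u.\ reconstructs exactly $C$, so what remains is verifying that the two premises are connected, i.e.\ lie in \conn.

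The main obstacle is precisely this connectedness check for the large premise. The small one, built around $L_{v,v'}$, is immediately connected because $v$ and $v'$ share a term variable in $\Gr{C}$ and $Q(\bar{x})$ shares at least one variable with those literals by construction. For the other premise, removing $\{v,v'\}$ from the spanning tree can disconnect the remaining vertices, so the entire burden of connectedness falls on $Q(\bar{x})$. The key observation that saves the argument is that $\Gr{C}$ itself is connected, so every connected component of $L_{\mathrm{rest}}$ (viewed as a subgraph of $\Gr{C}$) must be linked to $\{v,v'\}$ by at least one edge of $\Gr{C}$, whose label necessarily lies in $\bar{x}$; hence $Q(\bar{x})$ shares a variable with each such component and welds them into a single connected clause. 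Clauses with $\bsize{C}\leq 2$ already lie in $\conn_{a,2}$ and need no treatment, which closes the argument.
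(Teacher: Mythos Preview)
Your argument contains a genuine gap at the step ``every such variable labels at least one outgoing edge of the pair, so $|\bar{x}|\leq a$.'' Proposition~\ref{prop:adj} bounds the number of edges outgoing from $\{v,v'\}$ \emph{in the spanning tree $\Span$}, not in the full graph $\Gr{C}$. Your set $\bar{x}$ consists of all term variables shared between $L_{v,v'}$ and $L_{\mathrm{rest}}$; each such variable labels some outgoing edge of $\Gr{C}$, but not necessarily one that was retained in $\Span$. Hence the inequality $|\bar{x}|\le a$ does not follow. A concrete counterexample with $a=2$: take $P_0(x_1,x_2)\leftarrow P_1(x_1,x_3),P_2(x_2,x_4),P_3(x_3,x_4),P_4(x_1,x_4)$ and the pair $(P_0,P_1)$; the spanning tree $(P_0,P_1),(P_1,P_4),(P_4,P_2),(P_4,P_3)$ has a single edge outgoing from this pair, yet $\bar{x}=\{x_1,x_2,x_3\}$ has three elements, so your pivot $Q(\bar{x})$ would need arity $3>a$.

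The paper avoids this by placing in the pivot only the variables $x_1,\dots,x_{a'}$ that label the outgoing edges \emph{of the spanning tree}, which genuinely gives $a'\le a$. The price is that the resolvent of $C_1,C_2$ is then only a generalisation of $C$ (variables of $L_{v,v'}$ that also occur in $L_{\mathrm{rest}}$ but do not label a spanning-tree edge get renamed apart and are not re-identified by the m.g.u.), and the paper recovers $C$ by invoking the ``modulo variable unification'' convention built into the fragment. Your connectedness argument for the large premise is fine and in fact cleaner than the paper's, but to make the arity bound work you must either restrict the pivot to spanning-tree labels and accept a more general resolvent, or prove a strengthened version of Proposition~\ref{prop:adj} that controls the $\Gr{C}$-outgoing edges (or their label set) rather than the $\Span$-outgoing edges.
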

\begin{proof}
  Let $a\in\N^*$ be fixed and $C=P_0(\dots)\leftarrow P_1(\dots),\dots,P_k(\dots)\in\connmost{a}$ ($k\geq 3$).
  By applying Prop.\ \ref{prop:adj}, it is possible to identify two adjacent vertices $v$ and $v'$ in \Gr{C} such that there exists a spanning tree \Span\ of \Gr{C} where the number of edges outgoing from the pair $v$, $v'$ is less than or equal to $a$.
  Let $P_v$ and $P_{v'}$ be the predicate variables respectively corresponding to $v$ and $v'$ in $C$.
  Let $x_1,\dots,x_{a'}$ ($a'\leq a$) be the variables corresponding to the edges outgoing from the pair of vertices $v$, $v'$.
  Let $P_0'$ be an unused predicate variable of arity $a'$.
  We define: \(C_1 = P_0(\dots) \leftarrow P_0'(x_1,\dots,x_{a'}),P_1(\dots),\dots,P_k(\dots)\backslash\{P_v(\dots),P_{v'}(\dots)\}\) and \(C_2 = P_0'(x_1,\dots,x_{a'}) \leftarrow P_v(\dots),P_{v'}(\dots)\).
  These clauses are such that $C_1,C_2\in\conn_{a',\infty}$ and $C_1,C_2\vdash C$ modulo variable unification.\footnote{Some connections may be lost between variables in $C_1$ and $C_2$ since only the ones occurring in the spanning tree \Span\ are preserved. However, they can be recovered by unifying the disconnected variables together in the resolvent.}
  Thus, $C$ is reducible.
\end{proof}
We extend this result to the whole connected fragment.
\begin{theorem}[Reducibility of \conn]
\label{theorem14}
The fragment \conn\ is reducible to $\conn_{\infty,2}$.
\end{theorem}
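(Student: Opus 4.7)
The plan is to obtain Theorem \ref{theorem14} as an almost immediate corollary of Theorem \ref{th:connmostred}, by the observation that the arity bound in the latter is only used to control the size of the auxiliary predicate $P_0'$ introduced in the reduction, and any single clause $C \in \conn$ already has some finite maximum arity among its literals.

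First, I would take an arbitrary clause $C \in \conn$ with $\bsize{C} > 2$, and let $a$ be the maximum arity of a predicate variable occurring in $C$ (which is a finite integer since $C$ has finitely many literals). Then $C \in \connmost{a}$. Next, I would invoke Theorem \ref{th:connmostred} to obtain a value $b' < \bsize{C}$ and two premises $C_1, C_2 \in \conn_{a,b'}$ such that $C_1, C_2 \vdash C$ modulo variable unification. The concrete construction in the proof of Theorem \ref{th:connmostred} in fact gives $\bsize{C_2} = 2$ and $\bsize{C_1} = \bsize{C} - 1$, so one can even take $b' = \bsize{C} - 1$.

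Finally, since $\conn_{a,b'} \subseteq \conn_{\infty,b'}$, the premises $C_1, C_2$ belong to $\conn_{\infty,b'}$, which witnesses that $C$ is the resolvent of an inference with premises in $\conn_{\infty,b'}$ for some $b' < \bsize{C}$. As this holds for every $C \in \conn$ with $\bsize{C} > 2$, the fragment $\conn$ is by definition reducible to $\conn_{\infty,2}$.

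There is no real obstacle here: the potential subtlety would be if the arity bound $a$ in Theorem \ref{th:connmostred} played a global role (i.e.\ if the premises constructed could exceed some fixed arity), but the arity of the new predicate $P_0'$ introduced in the proof of Theorem \ref{th:connmostred} is $a' \leq a$ and no other literal of $C$ has its arity increased, so everything stays within $\conn_{a,\infty}$ for the same $a$ that bounds $C$. Hence the passage from bounded arity to unbounded arity is immediate, clausewise.
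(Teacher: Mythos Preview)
Your proposal is correct and matches the paper's approach: the paper states Theorem~\ref{theorem14} without proof, treating it as an immediate extension of Theorem~\ref{th:connmostred}, and your argument makes explicit exactly the clausewise reduction to the bounded-arity case that this extension requires.
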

Note that Theorem \ref{theorem14} does not imply that \conn\ has a reduction core because $\conn_{\infty,2}$ is also infinite.
In fact, since it is not possible to increase the arity of literals through SLD-resolution, any fragment where this arity is not constrained is guaranteed to have no reduction core since at least one literal of each arity must occur in it and the number of literals that occur in a clause is finite.

\section{Reducibility of \chainmost{2}}
\label{sec:chainmost}

We now consider the reducibility of \chainmost{2}. The restriction to monadic and dyadic literals is common not only in ILP \cite{mugg:metagold,evans2017learning,albarghouthi2017constraint,andreas} but also in description logics \cite{DBLP:books/daglib/0041477} and in ontology reasoning \cite{DBLP:journals/corr/HoheneckerL17}.
Although this fragment is only slightly more constrained than \connmost{2}, itself reducible to $\conn_{2,2}$, we show that it is impossible to reduce \chainmost{2} to any size-constrained sub-fragment.
To do so we exhibit a subset \nocut\ in \chainmost{2} that cannot be reduced.
This set contains clauses of arbitrary size.
In practice, this means that in \chainmost{2} given any integer $k$ it is possible to exhibit a clause of body size superior or equal to $k$ that cannot be reduced, thus preventing \chainmost{2} itself to be reducible to $\chain_{2,k}$ no matter how big $k$ is.
We start by defining the clause $\cbase\in\nocut$.
\begin{definition}[\cbase]
  \[
    \cbase=P_0(x_1,x_2)\leftarrow P_1(x_1,x_3),P_2(x_1,x_4),P_3(x_2,x_3),P_4(x_2,x_4),P_5(x_3,x_4).
  \]
\end{definition}
In \cbase\ all the literals are symmetrical to each other. Each literal (vertex) has (1) two neighbours connected by their first variable, (2) two other neighbours connected by their second variable, and (3) another literal that it is not connected to but which all the other literals are. This symmetry is better seen on the graphical representation of \cbase\ in Fig.\ \ref{fig:cbase}.
For example $P_0$ does not share literals with $P_5$ but does with all other predicates.
\begin{figure}[t]
  \centering
  \begin{subfigure}[b]{0.33\textwidth}
    \includegraphics[width=\textwidth]{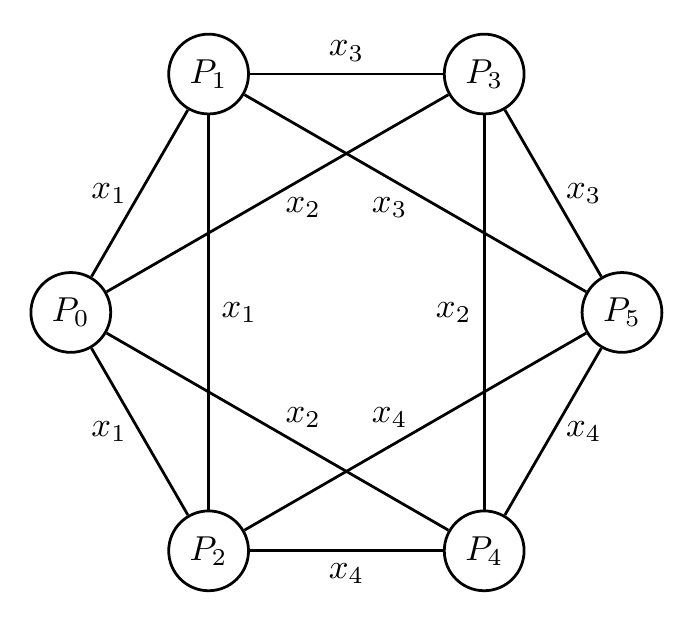}
    \caption{Graph encoding of \cbase, \Gr{\cbase}}
    \label{fig:cbase}
  \end{subfigure}
  ~
  \begin{subfigure}[b]{0.64\textwidth}
    \includegraphics[width=\textwidth]{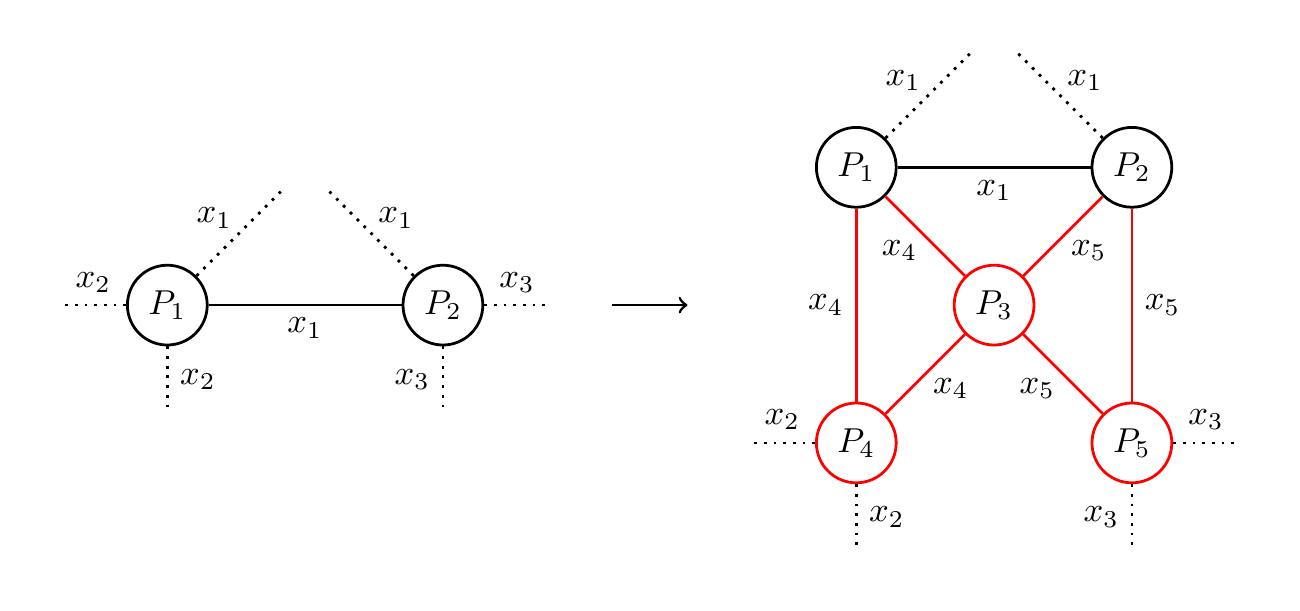}
    \caption{Partial graph encoding of a clause before and after a non-red preserving transformation}
    \label{fig:cext}
  \end{subfigure}
  \caption{Graph encoding of \nocut\ base and construction rule}
  \label{fig:nocut}
\end{figure}
\begin{proposition}[Non-reducibility of \cbase]
  \label{ijcarprop:base}
  $\cbase$ is irreducible.
\end{proposition}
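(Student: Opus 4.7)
The plan is a proof by contradiction. I would first suppose $\cbase$ is reducible, so that there exist $C_1, C_2 \in \chainmost{2}$ with $\bsize{C_1}, \bsize{C_2} < 5$ and $C_1, C_2 \vdash \cbase$. Since SLD-resolution has no factoring and the five body literals of $\cbase$ are pairwise distinct, the resolvent body contains exactly $\bsize{C_1} + \bsize{C_2} - 1$ literals, so $\bsize{C_1} + \bsize{C_2} = 6$ and $(\bsize{C_1}, \bsize{C_2}) \in \{(2,4),(3,3),(4,2)\}$. Let $L$ be the pivot in $C_1$'s body, $H_2$ the head of $C_2$, and $\sigma$ their \mgu. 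The resolution partitions the body of $\cbase$ into a set $S_1$ of images of the non-pivot body literals of $C_1$ and a set $S_2$ of images of the body of $C_2$, with $|S_1| \in \{1,2,3\}$ and $|S_2| \in \{2,3,4\}$. Let $A = \{P_0\} \cup S_1$ and $B = S_2$.

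The first step is to show that the set of term variables occurring in literals of both $A$ and $B$ is exactly the set of term variables of the bridge literal $L\sigma = H_2\sigma$. For the forward inclusion: since $C_1$ and $C_2$ are standardized apart, any term variable $x$ occurring on both sides of the resolvent arises from some $y$ in $C_1$ and some $z$ in $C_2$ with $\sigma(y) = \sigma(z) = x$; because $\sigma$ identifies variables only through the unification of $L$ with $H_2$, $y$ must occur in $L$ and $z$ in $H_2$, so $x$ appears in $L\sigma$. For the reverse inclusion: by 2-connectedness of $C_1$, each term variable of $L$ also appears in $H_1$ or in some non-pivot body literal of $C_1$, and symmetrically for $H_2$ in $C_2$, so every variable of $L\sigma$ occurs on both sides. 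Since $L$ has arity at most two, at most two term variables are shared between $A$ and $B$. This is the hardest step, since it carefully tracks how 2-connectedness forces each bridge variable to appear on both sides while standardization apart prevents any further variable identification.

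The second step is a combinatorial contradiction exploiting the triangle structure of $\Gr{\cbase}$. Each of the four term variables $x_1, x_2, x_3, x_4$ appears in exactly three literals, forming a triangle:
\[
T_1 = \{P_0,P_1,P_2\},\quad T_2 = \{P_0,P_3,P_4\},\quad T_3 = \{P_1,P_3,P_5\},\quad T_4 = \{P_2,P_4,P_5\}.
\]
A variable $x_i$ is unshared between $A$ and $B$ iff $T_i \subseteq A$ or $T_i \subseteq B$. Any two distinct triangles meet in exactly one vertex, so two simultaneously unshared variables would force the corresponding triangles to lie on the same side; as the union of any two triangles contains five vertices and $P_0 \in A$, this forces $|B| \leq 1$ when the shared side is $A$, and $|B| = 5$ when it is $B$ (which additionally requires both triangles to avoid $P_0$, hence to be $T_3$ and $T_4$). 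Both options contradict $|B| \in \{2,3,4\}$, so at most one of the four variables is unshared, meaning at least three are shared, contradicting the bound of two from the first step.
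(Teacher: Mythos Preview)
Your argument is correct and reaches the same contradiction as the paper, but the route to it is genuinely different in its second half. Both proofs start identically: a putative derivation forces a partition of the six literals of $\cbase$ into a side $A\ni P_0$ and a side $B$ with $|A|,|B|\in\{2,3,4\}$, and the 2-connectedness of the premises together with the at-most-dyadic pivot bounds the number of term variables shared between $A$ and $B$ by two. Where the paper then proceeds by a symmetry-reduced case analysis (listing the four non-equivalent $2$--$4$ and $3$--$3$ partitions of $\Gr{\cbase}$ and checking that each leaves at least three pending labels), you instead exploit the uniform incidence structure of $\cbase$: each variable $x_i$ lives in a triangle $T_i$ of three literals, any two triangles meet in exactly one vertex, and therefore at most one triangle can fit entirely on one side of a $\{2,3,4\}$-sized split. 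This yields ``at least three shared variables'' without enumerating any cases. Your approach is shorter and more conceptual; the paper's case analysis is more elementary and makes the role of the three-occurrences-per-variable property visually explicit. One small remark on your Step~1: your forward inclusion relies on the resolvent being \emph{exactly} $\cbase$ (so that shared variables must be identified by the \mgu\ of $L$ and $H_2$); the paper's framework also speaks of derivation ``modulo variable unification,'' and in that setting the forward inclusion is better justified via 2-connectedness and the three-occurrence property (as the paper does implicitly). This does not affect the validity of your argument under the paper's definition of reducibility, but it is worth noting.
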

\begin{proof}
  To derive \cbase\ from two smaller clauses, these two smaller clauses $C_1$ and $C_2$ must form a partition of the literals in \cbase\ if one excludes the pivot.
  To solve this problem, we partition the vertices of \Gr{\cbase} in two sets and count the number of edges with distinct labels that link vertices from the two sets.
  These edges correspond to pending variables in one of the sets, i.e.\ to the variables that must occur in the pivot that will be added in both sets to form $C_1$ and $C_2$.
  If there are more than two of these variables, the pivot cannot contain all of them, thus at least one of $C_1$ and $C_2$ is not in \chainmost{2} for lack of 2-connectivity.
  Each of the two sets in the partition must contain at least two elements, otherwise one of $C_1$, $C_2$ is as big as \cbase\ which does not make \cbase\ reducible even though it is derivable from $C_1,C_2$.
  The symmetries in \Gr{\cbase} are exploited to reduce the number of cases to consider to only four that vary along two dimensions: the cardinalities of the two subsets, either 2-4 or 3-3 respectively; and the connectedness of the subsets.
  In the 2-4 partition, only the following cases or symmetric ones are possible:
  \begin{itemize}
  \item if $\{P_0,P_5\}$ is the subset of cardinality 2 in a 2-4 partition, then the edges outgoing from this subset, connecting the two subsets and that correspond to pending variables, are labeled with $x_1$, $x_2$, $x_3$ and $x_4$;
  \item if $\{P_0,P_1\}$ is the subset of cardinality 2 in a 2-4 partition, then the outgoing edges are labeled with $x_1$, $x_2$ and $x_3$.
  \end{itemize}
  All the remaining 2-4 cases where $P_0$ is in the subset of cardinality 2 are symmetric to this case.
  The other 2-4 cases are symmetric to either one of these two cases.
  Similarly, all the 3-3 partition are symmetric to one of the following cases:
  \begin{itemize}
  \item if $\{P_0,P_1,P_2\}$ is one of the subsets in a 3-3 partition then the outgoing edges are labeled with $x_2$, $x_3$ and $x_4$;
  \item if $\{P_0,P_1,P_4\}$ is one of the subsets in a 3-3 partition then the outgoing edges are labeled with $x_1$, $x_2$, $x_3$ and $x_4$.
  \end{itemize}
  In all cases, there are 3 or more distinct labels on the edges between the two subsets, corresponding to pending variables, thus \cbase\ is irreducible.
  Note that this proof works because there are exactly three occurrences of each variable in \cbase.
  Otherwise it would not be possible to match the labels with the pending variables.
\end{proof}

We define a transformation that turns a clause into a bigger clause (Def.\ \ref{def:cext}) such that when applied to an irreducible clause verifying some syntactic property, the resulting clause is also irreducible (Prop.\ref{prop:incr}).
\begin{definition}[Non-red Preserving Extension]
  \label{def:cext}
  Let the body of a clause $C\in\chainmost{2}$ contain two dyadic literals sharing a common variable, e.g.\ $P_1(x_1,x_2)$ and $P_2(x_1,x_3)$, without loss of generality.
  A {\em non-red preserving} extension of $C$ is any transformation which replaces two such literals in $C$ by the following set of literals: $P_1(x_1,x_4)$, $P_2(x_1,x_5)$, $P_3(x_4,x_5)$, $P_4(x_4,x_2)$, $P_5(x_5,x_3)$ where $P_3$, $P_4$, $P_5$, $x_4$ and $x_5$ are new predicate and term variables.
\end{definition}

\techrep{On Fig.\ \ref{fig:cext} the neighboring structure of the literals after the non-red preserving transformation shows a symmetry between the ordered pairs of vertices $(P_1,P_4)$ and $(P_2,P_5)$ that mirrors the symmetry between the original vertices $P_1$ and $P_2$.
Even if this symmetry is partial due to the fact that the rest of the clause is unknown, it can still be exploited to reduce the number of cases to consider in the proof of Prop.\ \ref{prop:incr}}
\noindent
\begin{proposition}[Non-red Preserving Extension]
  \label{\appdx prop:incr}
  If a clause $C$ is irreducible and all the term variables it contains occur three times then any non-red preserving extension of $C$ is also irreducible.
\end{proposition}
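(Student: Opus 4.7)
The plan is to argue by contradiction. Suppose the extension $C'$ is reducible; then, as in the proof of Prop.\ \ref{ijcarprop:base}, there is a partition of the vertices of $\Gr{C'}$ into two sets $A'$ and $B'$ of size at least two, with at most two distinct variable labels on the edges between them (the variables of the would-be pivot, which has arity at most two). The goal is to convert $(A',B')$ into a partition $(A,B)$ of $\Gr{C}$ with the same two properties, contradicting the irreducibility of $C$.

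Define the induced partition $(A,B)$ by keeping every literal of $C\setminus\{P_1(x_1,x_2),P_2(x_1,x_3)\}$ on the side it had in $(A',B')$, placing $P_1(x_1,x_2)$ on the side containing $L_1=P_1(x_1,x_4)$, and $P_2(x_1,x_3)$ on the side containing $L_2=P_2(x_1,x_5)$. The fresh variables $x_4,x_5$ are internal to the five-literal gadget (they occur respectively in $\{L_1,L_3,L_4\}$ and $\{L_2,L_3,L_5\}$), so they may cross $(A',B')$ but are absent from $C$ and hence from $(A,B)$. The hypothesis that every variable of $C$ occurs exactly three times is essential: it guarantees that the occurrences of $x_1,x_2,x_3$ outside the gadget in $C'$ match their occurrences in $C$ verbatim, so the crossing status of these variables in $(A,B)$ can be read off directly from $(A',B')$. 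The same three-occurrence property also holds in $C'$, which is what will allow the proposition to be applied iteratively on top of $\cbase$.

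The main task is then to verify that $(A,B)$ has at most two crossing variables and satisfies $|A|,|B|\geq 2$. This is done by case analysis on how $L_1,\ldots,L_5$ split between $A'$ and $B'$, reduced to a handful of inequivalent configurations by the involutions $L_1\leftrightarrow L_2$, $L_4\leftrightarrow L_5$, $x_2\leftrightarrow x_3$, $x_4\leftrightarrow x_5$ (with $L_3$ fixed) and the $A'\leftrightarrow B'$ flip. When all five gadget literals lie on the same side, the crossing variables of $(A,B)$ are exactly those of $(A',B')$ minus $\{x_4,x_5\}$, hence at most two. When the gadget is split, at least one of $x_4,x_5$ is already a crossing variable of $(A',B')$ and consumes part of its budget; in each configuration one checks that a new crossing among $\{x_1,x_2,x_3\}$ created in $(A,B)$ --- which can happen exactly when $L_1,L_4$ or $L_2,L_5$ sit on opposite sides --- is always compensated by the disappearance of the corresponding internal crossing $x_4$ or $x_5$. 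The size condition $|A|,|B|\geq 2$ follows from $|A'|,|B'|\geq 2$ together with the observation that any side of $(A',B')$ consisting entirely of gadget literals would force all of $x_1,x_2,x_3$ to cross, violating the two-crossing bound.

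The main obstacle is executing this case analysis cleanly: although the symmetries collapse many configurations, each remaining one must be inspected to verify that the ``trade'' between gadget-internal and external crossings keeps the total at most two, and that enough original literals of $C$ remain on each side. Once these verifications are done, $(A,B)$ is a reducing partition of $\Gr{C}$, contradicting the irreducibility of $C$ and establishing the proposition.
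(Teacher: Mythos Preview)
Your approach is essentially the paper's: assume a reducing partition $(A',B')$ of $C_{ext}$ and convert it to a reducing partition of $C$. The paper likewise first excludes the case where one side consists entirely of gadget literals (its Table~1) and then handles the mixed cases (its Table~2). Your crossing-count argument---trading $x_4$ for a possible change in the status of $x_2$ when $L_1,L_4$ lie on opposite sides, and $x_5$ for $x_3$ when $L_2,L_5$ do---is correct and in fact tidier than the paper's row-by-row tabulation.

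The size argument, however, has a gap. Your stated justification is that neither side of $(A',B')$ can be entirely gadget, hence each contains at least one $C$-literal. But this only gives $|B|\geq 1$, not $|B|\geq 2$, in the configurations where $L_1,L_2\in A'$ and your fixed rule therefore sends both $P_1$ and $P_2$ to $A$. A concrete instance: take $B'=\{L_5,Q\}$ with $Q$ a monadic $C$-literal $Q(x_3)$. The only crossings are $x_3$ and $x_5$, so $(A',B')$ is a legitimate reducing partition of $C_{ext}$; yet your induced $B=\{Q\}$ has size one. The symmetric case $B'=\{L_4,Q(x_2)\}$ behaves identically. The paper sidesteps this by using an \emph{adaptive} placement rule rather than a fixed one: in its Table~2, for the gadget split $\{1,2,3,4\};\{5\}$ it places $P_1$ in $C_1$ and $P_2$ in $C_2$, so each side gains one original literal. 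Alternatively, you could close the gap without changing your rule by arguing that the existence of such a monadic $Q(x_3)$ adjacent to $P_2$ already makes $C$ reducible via $\{P_2(x_1,x_3),Q(x_3)\}$ against the rest---but that is an extra argument your proposal does not supply.
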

\conf{
  \begin{proofsk}
    We assume that a non-red preserving extension of $C$ is reducible and we use a case analysis to show that this implies that $C$ is also reducible, a contradiction.
    This proof heavily exploits the symmetry that can be seen on Fig.\ \ref{fig:cext} to reduce the number of cases to consider.
  \end{proofsk}
}
\techrep{
\begin{proof}
 Let $C$ be a irreducible clause containing the two literals $P_1(x_1,x_2)$ and $P_2(x_1,x_3)$ (without loss of generality) in which all variables occur exactly three times.
 Let \cext{} be the non-red preserving extension of $C$ where the two previously mentioned literals have been replaced by the set of literals $P_1(x_1,x_4)$, $P_2(x_1,x_5)$, $P_3(x_4,x_5)$, $P_4(x_4,x_2)$, $P_5(x_5,x_3)$ where $P_3$, $P_4$, $P_5$, $x_4$ and $x_5$ are new predicate and term variables.
 Assume that \cext{} is reducible.
 Then there exist two clauses \cext{1} and \cext{2} in \chainmost{2} both smaller than \cext, such that $\cext{1},\cext{2}\vdash \cext{}$.
 If \cext{1} is made of a subset of the literals in $\cext{}\backslash C$ (plus a pivot), then \cext{1} is not 2-connected because all these subsets leave three or more variables pending.
 The pending variables for each case are described in Tab.\ \ref{\appdx tab:fullcut} (the symmetrical cases are excluded).
 To illustrate how the table was built, we consider the case where \cext{1} contains $P_1(x_1,x_4),P_2(x_1,x_5),$ $P_3(x_4,x_5),P_4(x_4,x_2)$, i.e.\ the second line of Tab.\ \ref{\appdx tab:fullcut}.
 Consider these four literals, the variable $x_2$ is pending since it occurs exactly once.
 In addition, since the variables $x_1$ and $x_5$ occur only three times in \cext{}, they are also pending, albeit in \cext{2}.
 In the table, variables that are pending for this reason are followed by a star ($\star$).
 In total, there are three variables pending, which is one too many for the pivot to include all of them as arguments. 
 \begin{table}[t]
   \centering
   \caption{Pending variables when the given literal set is the body of \cext{1} - the $\star$ symbol indicates a variable pending in \cext{2}}
   \label{\appdx tab:fullcut}
   \begin{tabular}{|l|l|}
    \hline
    new literals in \cext{1} & pending variables \\
    \hline
     $P_1(x_1,x_4),P_2(x_1,x_5),P_3(x_4,x_5),$ &\\ \,$P_4(x_4,x_2),P_5(x_5,x_3)$ & $x_1\star,x_2\star,x_3\star$ \\
     $P_1(x_1,x_4),P_2(x_1,x_5),$ &\\ \,$P_3(x_4,x_5),P_4(x_4,x_2)$ & $x_1\star,x_2,x_5\star$ \\
     $P_1(x_1,x_4),P_2(x_1,x_5),$ &\\ \,$P_4(x_4,x_2),P_5(x_5,x_3)$ & $x_1\star,x_2,x_3,x_4\star,x_5\star$ \\
     $P_1(x_1,x_4),P_3(x_4,x_5),$ &\\ \,$P_4(x_4,x_2),P_5(x_5,x_3)$ & $x_1,x_2,x_3,x_5\star$ \\
     $P_1(x_1,x_4),P_2(x_1,x_5),P_3(x_4,x_5)$ & $x_1\star,x_4\star,x_5\star$ \\
     $P_1(x_1,x_4),P_2(x_1,x_5),P_4(x_4,x_2)$ & $x_1\star,x_2,x_4\star,x_5$ \\
     $P_1(x_1,x_4),P_3(x_4,x_5),P_4(x_4,x_2)$ & $x_1,x_2,x_5$ \\
     $P_1(x_1,x_4),P_3(x_4,x_5),P_5(x_5,x_3)$ & $x_1,x_3,x_4\star,x_5\star$ \\
     $P_1(x_1,x_4),P_4(x_4,x_2),P_5(x_5,x_3)$ & $x_1,x_2,x_3,x_4\star,x_5$ \\
     $P_3(x_4,x_5),P_4(x_4,x_2),P_5(x_5,x_3)$ & $x_2,x_3,x_4\star,x_5\star$ \\
     $P_1(x_1,x_4),P_2(x_1,x_5)$ & $x_1\star,x_4,x_5$ \\
     $P_1(x_1,x_4),P_3(x_4,x_5)$ & $x_1,x_4\star,x_5$ \\
     $P_1(x_1,x_4),P_4(x_4,x_2)$ & $x_1,x_2,x_4\star$ \\
     $P_1(x_1,x_4),P_5(x_5,x_3)$ & $x_1,x_3,x_4,x_5$ \\
     $P_3(x_4,x_5),P_4(x_4,x_2)$ & $x_2,x_4\star,x_5$ \\
     $P_4(x_4,x_2),P_5(x_5,x_3)$ & $x_2,x_3,x_4,x_5$ \\
     \hline
   \end{tabular}
    \end{table}
 The cases where \cext{2} is made only of the literals in $\cext{}\backslash C$ plus the pivot are symmetrical to the ones in Tab.\ \ref{\appdx tab:fullcut}.

 The remaining possibilities are when both \cext{1} and \cext{2} are made of a mix of the literals in $\cext{}\backslash C$ and $\cext{}\cap C$.
 In these cases, the contradiction appears by going from $\cext{1},\cext{2}\vdash\cext{}$ to $C_1,C_2\vdash C$.
 For example, if $P_1(x_1,x_4)$ and $P_2(x_1,x_5)$ belong to \cext{1} while the other literals from $\cext{}\backslash C$ belong to \cext{2}, then $x_4$ and $x_5$ are pending in \cext{1} (without pivot).
 There cannot be more than two variables pending in the pivot-less \cext{1}, \cext{2} pair or the pivot cannot take all of them as arguments so that they are not pending in $\cext{1}$ and $\cext{2}$ (with pivot), thus $x_4$ and $x_5$ are the only ones.
 Now consider $C_1$ and $C_2$, obtained respectively from \cext{1} and \cext{2} by deleting the five literals of $\cext{}\backslash C$ from them and adding $P_1(x_1,x_2)$ and $P_2(x_1,x_3)$, i.e.\ the literals in $C\backslash\cext{}$ into $C_1$.
 Before this transformation, the three occurrences of the variables $x_2$ and $x_3$ were located in \cext{2}.
 Due to the deletion of literals, only two occurrences of each remain in $C_2$ and one occurrence of each is now in $C_1$.
 Hence both $x_2$ and $x_3$ are pending in that case.
 Except for the variables $x_4$ and $x_5$ that are absent from $C_1$, $C_2$, the distribution of the remaining variables is unchanged when transforming \cext{1}, \cext{2} in $C_1$, $C_2$, hence these variables are not pending.
 Thus the pair $C_1$ $C_2$ make $C$ reducible, a contradiction.

 \begin{table}[t]
   \centering
   \caption{Transformation from $(\cext{1},\cext{2})$ to $(C_1,C_2)$ and corresponding evolution of the pending variables}
   \label{\appdx tab:partialcut}
   \begin{tabular}{|r@{\hspace{1mm}}c@{\hspace{1mm}}l|r@{\hspace{1mm}}c@{\hspace{1mm}}l|r@{\hspace{1mm}}c@{\hspace{1mm}}l|}
     \hline
     $\cext{1}$ & ; & $\cext{2}$ & \multicolumn{3}{c|}{pending variables} & $C_1$ & ; & $ C_2$ \\
     \hline
     $1,2,3,4,5$ & ; & $\emptyset$ & $\emptyset$ & ; & $\emptyset$ & $1,2$ & ; & $\emptyset$ \\
     $1,2,3,4$ & ; & $ 5$ & $x_5$ & ; & $ x_1$ & $1$ & ; & $ 2$ \\
     $1,2,4,5$ & ; & $ 3$ & $x_4,x_5$ & ; & $\emptyset$ & $1,2$ & ; & $\emptyset$ \\
     $1,3,4,5$ & ; & $ 2$ & $x_1,x_5$ & ; & $\emptyset$ & $1,2$ & ; & $\emptyset$ \\
     $1,2$ & ; & $ 3,4,5$ & $x_4,x_5$ & ; & $ x_2,x_3$ & $1,2$ & ; & $\emptyset$ \\
     $1,3$ & ; & $ 2,4,5$ & $x_1,x_4,x_5$ & ; & $\ast\ast\ast$ & $\ast\ast\ast$ & ; & $\ast\ast\ast$ \\
     $1,4$ & ; & $ 2,3,5$ & $x_4$ & ; & $\emptyset$ & $1$ & ; & $ 2$ \\
     $1,5$ & ; & $ 2,3,4$ & $x_1,x_4,x_5$ & ; & $\ast\ast\ast$ & $\ast\ast\ast$ & ; & $\ast\ast\ast$ \\
     $3,4$ & ; & $ 1,2,5$ & $x_4,x_5$ & ; & $ x_2$ & $\emptyset$ & ; & $ 1,2$ \\
     $4,5$ & ; & $ 1,2,3$ & $x_4,x_5$ & ; & $ x_2,x_3$ & $\emptyset$ & ; & $ 1,2$ \\
     \hline
   \end{tabular}
   \end{table}
 By taking into account all the symmetries of the problem, there are only ten such cases to consider.
 They are summarized in Tab.\ \ref{\appdx tab:partialcut}.
 On the left-hand side of the table is the partition between \cext{1} and \cext{2} of the five literals in $\cext{}\backslash C$.
 On the right-hand side of the table is the partition between $C_1$ and $C_2$ of the two literals in $C\backslash\cext{}$.
 As was done in the previous example, $C_1$ and $C_2$ are obtained by removing the five literals in $\cext{}\backslash C$ from \cext{1} and \cext{2} respectively and replacing them with the two literals in $C\backslash\cext{}$ as indicated in the table.
 For readability, the literals are only referred to by their number.
 In the middle of the table are the variables that are known to be pending in each case (in \cext{1} and \cext{2} on the left-hand side and in $C_1$ and $C_2$ on the right-hand side).
 In the cases where there are strictly less than two identified pending variables, there may also be unknown pending variables, but these are preserved by the transformation and thus do not impact the reasoning.
 In most of the cases, it is possible to have at most two variables pending on the right-hand side of the table, implying that $C$ is reducible, a contradiction.
 There are also two cases where the assumption that $\cext{1},\cext{2}\in\chainmost{2}$ is not verified because there are already more than two variables pending in the explicit parts of \cext{1} and \cext{2}.
 In such cases, there is nothing to verify so the right-hand side of the table is filled with asterisks (*).
\end{proof}
}

Starting from \cbase\ and using this extension, we define \nocut\ formally (Def.\ \ref{def:nocut}) and, as a consequence of Prop.\ \ref{prop:incr}, \nocut\ contains only irreducible clauses (Prop.\ \ref{prop:nocut}).

\begin{definition}[Non-reducible Fragment]
  \label{def:nocut}
  The subset \nocut\ of \chainmost{2} contains \cbase\ and all the clauses that can be obtained by applying a non-red extension to another clause in \nocut.
\end{definition}

\begin{proposition}[Non-reducibility of \nocut]
  \label{prop:nocut}
  For all $C\in\nocut$, $C$ is irreducible.
\end{proposition}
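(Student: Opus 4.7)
The natural plan is a straightforward induction on the number of non-red preserving extensions used to build a clause $C \in \nocut$ starting from $\cbase$, using Prop.\ \ref{ijcarprop:base} as the base case and Prop.\ \ref{prop:incr} for the inductive step. The subtlety is that Prop.\ \ref{prop:incr} requires not only that the source clause be irreducible but also that every term variable in it occur exactly three times. So the right move is to strengthen the induction hypothesis to the conjunction of two properties: (P1) $C$ is irreducible, and (P2) every term variable occurring in $C$ occurs exactly three times.

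For the base case, I would verify (P2) for \cbase\ by direct inspection: $x_1$ occurs in $P_0,P_1,P_2$; $x_2$ in $P_0,P_3,P_4$; $x_3$ in $P_1,P_3,P_5$; $x_4$ in $P_2,P_4,P_5$, each exactly three times. Property (P1) for \cbase\ is Prop.\ \ref{ijcarprop:base}.

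For the inductive step, assume $C \in \nocut$ satisfies (P1) and (P2), and let $C'$ be a non-red preserving extension of $C$, replacing the two literals $P_1(x_1,x_2), P_2(x_1,x_3)$ by $P_1(x_1,x_4), P_2(x_1,x_5), P_3(x_4,x_5), P_4(x_4,x_2), P_5(x_5,x_3)$ with fresh $x_4,x_5$. Property (P1) for $C'$ is exactly the conclusion of Prop.\ \ref{prop:incr}. For (P2), I would argue by a short occurrence count: the variable $x_1$ loses no occurrence (both $P_1$ and $P_2$ are kept with $x_1$ as their first argument), so its total count stays at three; $x_2$ loses its occurrence in $P_1$ but gains one in $P_4$, so its total stays at three, and symmetrically for $x_3$ (lost in $P_2$, gained in $P_5$); the fresh variables $x_4$ and $x_5$ appear in exactly three new literals each ($x_4$ in $P_1,P_3,P_4$ and $x_5$ in $P_2,P_3,P_5$); and all other variables of $C$ are untouched by the transformation.

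The inductive conclusion then follows, and since every clause in \nocut\ is by Def.\ \ref{def:nocut} reachable from \cbase\ by finitely many non-red preserving extensions, property (P1) holds throughout \nocut, which is the statement of the proposition. The only step with any real content is the invocation of Prop.\ \ref{prop:incr}, whose case analysis has already been carried out above; verifying that (P2) is preserved is just bookkeeping and is the easiest part of the argument.
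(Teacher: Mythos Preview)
Your proposal is correct and matches the paper's approach: the paper does not give an explicit proof of this proposition but states it as a consequence of Prop.~\ref{ijcarprop:base} and Prop.~\ref{prop:incr}, leaving the induction implicit. Your explicit strengthening of the induction hypothesis to include the ``every variable occurs exactly three times'' invariant (needed to keep invoking Prop.~\ref{prop:incr}) is precisely the missing detail, and your occurrence bookkeeping is accurate.
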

The non-reducibility of \nocut\ ensures that the body size of the clauses in a hypothetical reduction core of $\chainmost{2}$ cannot be bounded, which in turn prevents the existence of this reduction core.
This result has negative consequences on ILP approaches that use second-order templates.
We discuss these consequences in the conclusion.

\section{Reducibility of \chainmost{3}}
\label{sec:chainmost-three}
\begin{figure}[t]
  \centering
  \includegraphics[scale=0.7]{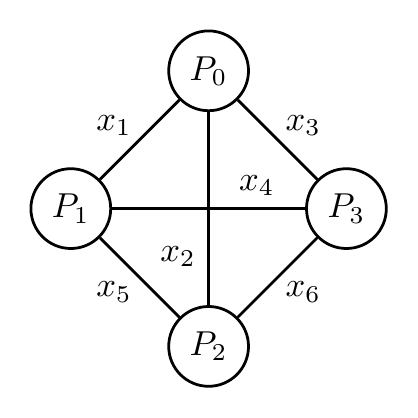}
  \caption{\Gr{C} for $C=P_0(x_1,x_2,x_3)\leftarrow P_1(x_1,x_4,x_5),$ $P_2(x_2,x_5,x_6),P_3(x_3,x_4,x_6)$}
  \label{fig:cnored}
\end{figure}

The reducibility of \chainmost{3} is still an open problem.
However, we know that it cannot be reduced to $\chain_{3,2}$.
\begin{theorem}[Non-reducibility of $\chain_{3,2}$]
  \label{th:notriad}
  \chainmost{3} cannot be reduced to $\chain_{3,2}$
\end{theorem}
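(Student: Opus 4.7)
The plan is to exhibit an explicit clause $C \in \chainmost{3}$ of body size $3$ that cannot be derived as the resolvent of any two clauses in $\chain_{3,2}$. The natural witness is the clause from Fig.\ \ref{fig:cnored}, namely $C = P_0(x_1,x_2,x_3) \leftarrow P_1(x_1,x_4,x_5), P_2(x_2,x_5,x_6), P_3(x_3,x_4,x_6)$, whose graph encoding $\Gr{C}$ is the complete graph $K_4$ with distinct labels on all six edges. First I would verify that $C \in \chainmost{3}$: every literal has arity $3$, the clause is connected, and each of the six term variables appears in exactly two distinct literals, so none is pending.

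Suppose, for contradiction, that $C_1, C_2 \in \chain_{3,2}$ with \mgu\ $\sigma$ and pivot $L$ (where $L \in b(C_1)$ unifies with $h(C_2)$) resolve to $C$. Let $S_1 \subseteq b(C)$ denote the set of body literals of $C$ contributed by $C_1$ and $S_2$ the set contributed by $C_2$, so that $b(C_1\sigma) = S_1 \cup \{L\sigma\}$, $b(C_2\sigma) = S_2$, and $S_1 \cup S_2 = \{P_1,P_2,P_3\}$. Combining the body-size bounds $|b(C_1)|, |b(C_2)| \leq 2$ with $|S_1| + |S_2| = 3$ forces $|S_1| = 1$ and $|S_2| = 2$. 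Since the three body literals of $C$ play fully symmetric roles in $\Gr{C} \cong K_4$, I can assume without loss of generality that $S_1 = \{P_1(x_1,x_4,x_5)\}$, so $C_1\sigma$ has the form $P_0(x_1,x_2,x_3) \leftarrow P_1(x_1,x_4,x_5), L\sigma$.

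The contradiction then follows from a 2-connectedness count on $C_1\sigma$. After observing that the three literals of $C_1\sigma$ are pairwise distinct (the predicates $P_0$ and $P_1$ differ, and in the most general form of $C_1$ the pivot $L$ may be taken to carry a fresh predicate), 2-connectedness requires every variable of $C_1\sigma$ to occur in at least two distinct literals. Among the non-pivot literals, $x_2$ and $x_3$ appear only in $P_0$ while $x_4$ and $x_5$ appear only in $P_1$; hence all four of these distinct variables must also appear among the arguments of $L\sigma$. But $L$ has arity at most $3$ because $C_1 \in \chainmost{3}$, so $L\sigma$ has at most three argument positions and cannot hold four distinct variables, a contradiction.

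The main obstacle will be the careful handling of the interaction between 2-connectedness of $C_1$ (in its most general form) and of $C_1\sigma$: one must check that identifying variables of $C_1$ through $\sigma$ cannot rescue the arity bound. This amounts to observing that $\sigma$ can only collapse variables of $C_1$ whose images in $C$ coincide, and such collapses never decrease and typically increase the number of distinct variables that $L$ is forced to contain. Once this point is settled, the existence of the irreducible clause $C$ of body size $3$ in $\chainmost{3}$ immediately precludes reducibility of $\chainmost{3}$ to $\chain_{3,2}$.
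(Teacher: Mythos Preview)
Your proposal is correct and takes essentially the same approach as the paper: both exhibit the same witness $C$ (the $K_4$ clause of Fig.~\ref{fig:cnored}) and argue that the relevant partition of its literals leaves four variables pending, too many for a pivot of arity at most~$3$. The only minor difference is that the paper additionally checks the 1--3 partition so as to conclude that $C$ is irreducible in all of $\chainmost{3}$, whereas your body-size bookkeeping $|S_1|=1$, $|S_2|=2$ already excludes that case for the theorem as stated.
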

\begin{proof}
  The clause $C=P_0(x_1,x_2,x_3)\leftarrow P_1(x_1,x_4,x_5),$ $P_2(x_2,x_5,x_6),$ $P_3(x_3,$ $x_4,x_6)$, shown in graph form in Fig.\ \ref{fig:cnored}, is a counter-example because any pair of literals in it contain exactly four pending variables.
  For example, consider the following pair of literals: $(P_1(x_1,x_4,x_5),P_0(x_1,x_2,x_3))$ leaves $x_2,x_3,x_4,x_5$ pending.
By symmetry the same holds for all the other pairs of literals.
Thus none of these pairs can be completed by a triadic (or less) pivot.
In addition, the removal of any single literal from $C$ does not lead to a reduction of the clause since all the variables occurring in the literal then occur only once in each subset of the clause.
For example, to replace $P_1(x_1,x_4,x_5)$, a triadic literal containing $x_1$, $x_4$ and $x_5$ needs to be added, creating a clause identical to $C$ up to the name of one predicate variable and the order of the term variables in it.
Therefore $C$ is irreducible in \chainmost{3}, thus \chainmost{3} cannot be reduced to $\chain_{3,2}$.
\end{proof}
In addition to this result, for lack of finding a reduction to
$P_0(x_1,x_2,x_3)\leftarrow$ $P_1(x_1,x_5,x_6),$ $P_2(x_2,x_4,x_8),$ $P_3(x_6,x_7,x_8),$ $P_4(x_4,x_5,$ $x_7),$ $P_5(x_3,x_4,x_7)$
(not formally proved) we conjecture that \chainmost{3} cannot be reduced to $\chain_{3,4}$.
Clarifying this situation and that of any \chainmost{a} with $a\geq 3$ is left as future work.

\section{Extension to Standard Resolution}
\label{sec:res}

Although we introduced the derivation reduction problem for SLD-resolution, the principle applies to any standard deductive proof system, and in particular, it can be applied to standard resolution, extended from first to second-order logic in the same way that was used for SLD-resolution.
Given that SLD-resolution is but a restriction of resolution, the positive reducibility result for $\connmost{2}$ is directly transferable to standard resolution.
On the contrary, the fragment $\chainmost{2}$, that we proved irreducible with SLD-resolution, can be reduced to $\chain_{2,2}$ with standard resolution.
\techrep{We identify with $_\res$ the notions where standard resolution replaces SLD-resolution.}
\begin{theorem}[Reducibility$_\res$ of \chainmost{2}]
  \label{\appdx th:chainmosttwo}
  \chainmost{2} is reducible$_\res$ to $\chain_{2,2}$
\end{theorem}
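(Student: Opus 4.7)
The plan is to show that every $C \in \chainmost{2}$ with $\bsize{C}>2$ is the resolvent of a single standard-resolution inference whose premises lie in $\chainmost{2}$ and have strictly smaller body sizes. The key new ingredient over SLD-resolution is factoring, which during a single inference step lets several body literals of one premise be unified simultaneously with the head literal of the other premise.

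The first step is to observe that standard resolution subsumes SLD-resolution, so every $C \in \chainmost{2}$ that is already reducible under SLD is \emph{a fortiori} reducible$_\res$ using exactly the same pair of premises. This immediately handles all clauses of $\chainmost{2}$ except for those in the non-reducible fragment $\nocut$ of Definition \ref{def:nocut}, and reduces the problem to providing a standard-resolution reduction for each clause of $\nocut$, with $\cbase$ as the paradigmatic case.

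For a clause in $\nocut$, my plan is to design premises $C_1, C_2$ in which a suitably chosen pivot predicate variable $Q$ appears as the head of one premise and as two different body literals of the other. Under the standard-resolution inference that uses both $Q$-body-literals simultaneously as the pivot — factored together with the head $Q$ — the two body literals collapse into one and transfer their combined arguments to the resolvent, bypassing the single-dyadic-pivot obstruction underlying the SLD-irreducibility of Proposition \ref{ijcarprop:base}. Any additional literal needed to preserve 2-connectedness in the premises would be introduced with a fresh predicate variable chosen so that, during the same inference, factoring identifies it with a literal already transferred to the resolvent and leaves no trace there. An induction on $\bsize{C}$ then iterates the reduction down to $\chain_{2,2}$.

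The main obstacle I expect is controlling the variable identifications imposed by each factoring step: factoring forces not just the predicate symbols but also the corresponding argument slots of the merged literals to be unified, and an incautious construction collapses distinct variables of $C$ in the most general form. Showing that, for every clause in $\nocut$, one can place the pivot copies and any auxiliary literals so that the forced identifications only involve variables fresh to $C$ (or coincide with identifications already present in $C$) is the delicate part of the proof. It will likely require a graph-theoretic case analysis on $\Gr{C}$, in the spirit of the spanning-tree argument for Theorem \ref{th:connmostred} but tailored to exploit factoring rather than arity-slack in the pivot.
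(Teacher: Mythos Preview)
Your proposal has two genuine gaps.

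First, the reduction to $\nocut$ is unjustified. The set $\nocut$ (Definition~\ref{def:nocut}) is a specific inductive family built from $\cbase$ by non-red preserving extensions; it is a \emph{witness} that $\chainmost{2}$ has SLD-irreducible clauses of arbitrary size, not a characterisation of all SLD-irreducible clauses. There is no reason to believe every SLD-irreducible clause of $\chainmost{2}$ lies in $\nocut$ (for instance, any clause in which every literal is dyadic, every variable occurs exactly three times, and no two literals share both variables is a candidate, and most such clauses are not reachable from $\cbase$ by the extension of Definition~\ref{def:cext}). So ``handle $\nocut$ and inherit the rest from SLD'' does not cover the theorem.

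Second, the factoring mechanism you describe does not do what you need. If the pivot predicate $Q$ occurs as two body atoms $Q(a,b)$ and $Q(c,d)$ in one premise and as the head $Q(y_1,y_2)$ of the other, then resolving on both copies simultaneously forces $a=c$ and $b=d$ in the resolvent. Unless those identifications were already present in $C$, this collapses distinct term variables of $C$, and you cannot recover $C$ modulo renaming. You note this obstacle yourself, but it is not a matter of careful placement: with a dyadic pivot the two copies necessarily share their two argument slots after factoring, so the ``combined arguments'' you hope to transfer are in fact just two variables, exactly as with a single pivot. The mechanism gives no extra connectivity.

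The paper's proof takes a different route. It works directly on an arbitrary $C$ with $\bsize{C}\ge 3$, first disposing of the easy cases (a monadic literal present; two literals on the same variable pair; some variable occurring twice or at least four times) by an ordinary 2--$(\bsize{C}{-}1)$ split. The remaining hard case is: all literals dyadic, all variables occurring exactly three times, all variable pairs distinct. Here the paper does \emph{not} duplicate the pivot; instead it builds two premises whose \emph{bodies overlap}. Concretely, one takes the three literals $C'$ containing a chosen variable $x_1$, finds in $\Gr{C\setminus C'}$ a path $C_{23}$ linking two of the three pending variables, and sets the premises to $C'\cup C_{23}$ and $C\setminus C'$ (each plus a single dyadic pivot on the remaining pending variable). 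After binary resolution on the pivot, the literals of $C_{23}$ appear twice in the resolvent---once from each premise, with fresh predicate variables---and a factoring step merges each such pair. This identifies only fresh predicate variables and already-equal term variables, never distinct variables of $C$. Both premises are strictly smaller than $C$, which yields reducibility$_\res$ to $\chain_{2,2}$.
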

\techrep{
\begin{proof}
  Let $C\in\chainmost{2}$ such that $\bsize{C}\geq 3$.
  \begin{itemize}
  \item If $C$ contains a monadic literal, denoted $P(x)$, then due to the 2-connected constraint, there is at least another occurrence of $x$ in $C$.
    Let us denote the other literal in which $x$ occurs as $P_x$.
    Its parameters are left implicit since $P_x$ can be either monadic or dyadic, but $x$ is necessarily  one of them.
    The partition of $C$ into $\{P(x),P_x\}$, $C\backslash\{P(x),P_x\}$ leaves between zero and two variables pending (these pending variables occur in $P_x$).
    Thus $C$ is reducible$_\res$ and the premises of the corresponding inference are $C_1$ and $C_2$, both in \chainmost{2}, such that $\{P(x),P_x\}\subset C_1$, $C\backslash\{P(x),P_x\}\subset C_2$ and the pivot contains the pending variables.
    If no variable is pending, then $x$ occurs at least twice in both sets due to 2-connectedness, thus the resolving literal can also contain this variable while preserving the 2-connectedness of the two newly formed clauses.
    This transformation creates one rule of body size two ($C_1$), and another of body size $\bsize{C}-1$ ($C_2$).
  \item If $C$ contains two dyadic predicates with the same variables (the occurrence of two monadic predicates is covered by the previous case), e.g., $P_1(x_1,x_2)$ and $P_2(x_1,x_2)$, then the partition of $C$ into $\{P_1(x_1,x_2),P_2(x_1,x_2)\}$ and $C\backslash\{P_1(x_1,x_2),P_2(x_1,x_2)\}$ can be used in the same way as in the preceding case to obtain two smaller clauses resolving into $C$ since at most $x_1$ and $x_2$ are pending.
  \item If $C$ contains exactly two occurrences of the same variable in distinct literals or more than three occurrences of the same variable in distinct literals, it is possible to reduce$_\res$ $C$ by taking away two of these literals.
    This does not create more than two pending variables in the resulting partition, as in the two previous cases.
  \end{itemize}
  Let us now consider a clause $C$ that cannot be reduced$_\res$ following any of the three previous schemes.
  Such a clause $C$ has the following characteristics:
  \begin{itemize}
  \item $C$ contains only dyadic predicates,
  \item $C$ does not contain two predicates that have the same pair of variables,
  \item all variables in $C$ occur exactly three times.
  \end{itemize}
  If any of these conditions is not verified, one of the previous schemes can be applied on $C$ to reduce$_\res$ it as described.
  A first notable consequence of these characteristics is that $C$ is such that $\bsize{C}\geq 5$ because $C$ needs at least four variables to not have the same pair of variables occurring twice in different literals of $C$ (all variables occur three times, thus $C$ needs at least twelve variable occurrences grouped in 6 pairs).
  Of course, these conditions are not sufficient to prevent $C$ from being reducible$_\res$ in \chainmost{2}.

  To prove this point, let us assume that the clause $C$ is not reducible$_\res$ in \chainmost{2}.
  Let $x_1$ be a variable occurring in $C$ in the literals $P_1(x_1,x_2)$, $P_2(x_1,x_3)$ and $P_3(x_1,x_4)$ (without loss of generality).
  The partition of $C$ into $C'=\{P_1(x_1,x_2),$    $P_2(x_1,x_3), P_3(x_1,x_4)\}$ and $C\backslash C'$ leaves three variables pending in $C'$, namely $x_2$, $x_3$ and $x_4$, and none in $C\backslash C'$.
  As such, it is not suitable because one too many variable is pending.
  However, it is still possible to use this partition to find resolvents that prove $C$ reducible$_\res$ in \chainmost{2}.
  To do so, let us use again briefly the graph encoding from Def.\ \ref{def:graph}.
  It is possible to find a path in $\Gr{C\backslash C'}$ between two vertices mapped to predicates in each of which a distinct variable among $x_2$, $x_3$ and $x_4$ occurs.
  This is proven by contradiction.
  Assume no such path exists, then $\Gr{C\backslash C'}$ is made of three disconnected components, corresponding to three clauses $C_2$, $C_3$ and $C_4$, subclauses of $C$ where respectively only $x_2$, $x_3$ and $x_4$ occurs, i.e.\ in $C_2$, $x_3$ and $x_4$ do not occur but other unrelated variables possibly do, and $x_2$ certainly occurs, and $C_3$ and $C_4$ follow the same pattern.
  It follows that the partition of $C$ in, e.g., $C_2\cup\{P_1(x_1,x_2)\}$, $C_3\cup C_4\cup\{P_2(x_1,x_3),P_3(x_1,x_4)\}$, with the addition of a pivot on $x_1$ creates the necessary premises for $C$ to be reducible$_\res$, since $x_1$ is the only variable pending in the partition, a contradiction.
  Thus a path containing edges labeled with two of the variables $x_2$, $x_3$ and $x_4$ must exist in $\Gr{C\backslash C'}$.

  We assume w.l.o.g.\ that this path links vertices corresponding to predicates that occur applied respectively to $x_2$ and $x_3$ without going through an edge labeled with $x_4$ (otherwise, the shorter path between, e.g., the predicates that occur applied to $x_2$ and $x_4$ should be preferred).
  We denote $C_{23}$ the set of literals that are mapped in $\Gr{C\backslash C'}$ to edges in this path.
  Then the clause $C'\cup C_{23}$ has only one pending variable: $x_4$.
  By adding the pivot to the pair of clauses $C'\cup C_{23}$, $C\backslash C'$, premises to derive$_\res$ $C$ are created.
  Note that the situation where the head of $C$ occurs in both sets, preventing the addition of the resolving literal can be solved by selecting $x_1$ in such a way that the head of $C$ belongs to $C'$.
  The size of both sets is smaller than that of $C$ but still greater or equal to 3 since $|C|\geq 6$:
  \begin{itemize}
  \item $|C\backslash C'|=|C|-3$,
  \item $|C\cup C_{23}|\leq |C|-2$ because, without loss of generality, the path including edges labeled with $x_2$ and $x_3$ does not go through $x_4$ (in case it does, the shorter path where $x_2$ and $x_4$ occur as edge labels should be used instead), thus two of the three occurrences of $x_4$ in $C$ do not occur in $C\cup C_{23}$.
  \end{itemize}
  Thus the two obtained clauses are smaller than $C$ and we can conclude that $C$ is reducible$_\res$.
\end{proof}
}
\conf{
\begin{proofsk}
  We first analyse the structure of $C$ and show how to reduce $C$ in the simple cases where it is also possible to reduce $C$ using SLD-resolution.
  We are then left to consider clauses where $C$ contains only dyadic predicates, no two predicates in $C$ have the same pair of variables and all variables occur exactly three times in $C$.
  An example of such clauses is the \nocut\ family from Sect.\ \ref{sec:chainmost}.
  Then we present a method to reduce$_\res$ such a clause $C$.
  The key point that justifies Th.\ \ref{th:chainmosttwo} is that in standard resolution, factorisation is allowed and thus allows inferences that remove duplicate literals.
  The removal of duplicate literals would be also possible with SLD-resolution but only when the fragment contains bodyless clauses which is prevented by 2-connectedness.
\end{proofsk}
}

Let us consider an example of additional inferences allowed with resolution but not with SLD-resolution in the \chainmost{2} fragment, that make the C\(_{\text{base}}\) clause redundant:
\begin{center}
\begin{tabular}{rcl}
$P_0(x_1,x_2)$ & $\leftarrow$ & $P_1(x_1,x_3),P_2(x_1,x_4),P_3(x_2,x_3),H(x_2,x_4)$\\
$H'(x_2',x_4')$ & $\leftarrow$ & $P_3'(x_2',x_3'),P_4'(x_2',x_4'),P_5'(x_3',x_4')$\\
\hline
$P_0(x_1,x_2)$ & $\leftarrow$ & $P_1(x_1,x_3),P_2(x_1,x_4),P_3(x_2,x_3),P_3'(x_2,x_3'),P_4'(x_2,x_4),P_5'(x_3',x_4)$\\
\hline
$P_0(x_1,x_2)$ & $\leftarrow$ & $P_1(x_1,x_3),P_2(x_1,x_4),P_3(x_2,x_3),P_4'(x_2,x_4),P_5'(x_3,x_4)$\\
\end{tabular}
\end{center}
The first step is a resolution that unifies $H'$ with $H$, $x_2'$ with $x_2$ and $x_4'$ with $x_4$ and uses $H(x_2,x_4)$ as pivot.
The second step is a factorisation that unifies $P_3'$ with $P_3$, and $x_3'$ with $x_3$.
The result is \cbase\ up to variable renaming.

Finally, the result that we presented for $\chainmost{3}$ is also transferable from SLD- to standard resolution since the proof of Th.\ \ref{th:notriad} remains the same.
This is because the size of the considered clauses does not allow for the kind of resolution inferences that make Th.\ \ref{th:chainmosttwo} possible.
Table \ref{tab:sum} summarises our findings and their extension to standard resolution.

\begin{table}[bt]
  \centering
  \caption{Summary of the results.
    When a fragment is preceded with $>$ the entry must be read as ``no reduction up to this fragment''.
    The word \emph{possibly} precedes results that have not been proved and are only conjectured.}
  \label{tab:sum}
  \begin{tabular}{cr|r|}
    \cline{2-3}
    &\multicolumn{2}{|c|}{Reducibility} \\
    \hline
    \multicolumn{1}{|c|}{Fragment} & SLD-resolution & Standard resolution\\
    \hline
    \multicolumn{1}{|c|}{\conn} & $\conn_{\infty,2}$ & $\conn_{\infty,2}$ \\
    \multicolumn{1}{|c|}{\chainmost{2}} & no & $\chain_{2,2}$ \\
    \multicolumn{1}{|c|}{\chainmost{3}} & $>\chain_{3,2}$ & $>\chain_{3,2}$ \\
    \multicolumn{1}{|c|}{} & possibly $>\chain_{3,4}$ & possibly $>\chain_{3,4}$\\
    \hline
  \end{tabular}
  \end{table}

\section{Conclusion}
\label{sec:concl}

We have introduced the derivation reduction problem for second-order Horn clauses (\Horn), i.e.\ the undecidable problem of finding a finite subset of a set of clauses from which the whole set can be derived using SLD-resolution.
We have considered the derivation reducibility of several fragments of \Horn, for which the results are summarised in Tab.\ \ref{tab:sum}.
We have also extended the results from SLD-resolution to standard resolution.
Further work is necessary to clarify the situation for \chainmost{3} and for fragments with higher arity constraints.

Although we have positive results regarding the reducibility of certain fragments, we have not identified the reductions of those fragments, nor have we provided any results regarding the cardinality of the reductions.
Future work should address this limitation by introducing algorithms to compute the reductions.

Our results have direct implications in ILP. As described in the introduction, many ILP systems use second-order Horn clauses as templates to define the hypothesis space. An open question \cite{crop:minmeta,crop:thesis,mugg:metagold} is whether there exists finite sets of such clauses from which these systems could induce any logic program in a specific fragment of logic. Prop.\ \ref{prop:nocut} shows that for the \chainmost{2} fragment, which is often the focus of ILP, the answer is no. This result implies that ILP systems, such as Metagol \cite{metagol} and HEXMIL \cite{inoue:mil-hex}, are incomplete in that they cannot learn all programs in this fragment without being given an infinite set of clauses (these approaches require a finite set of such clauses hence the incompleteness).

Our work now opens up a new challenge of overcoming this negative result for \chainmost{2} (and negative conjectures for \chainmost{3}). One possible solution would be to allow the use of triadic literals as pivot in inferences in specific cases where SLD-resolution fails to derive the desired clause, but this idea requires further investigation.

\subsubsection{Acknowledgements.}
The authors thank Katsumi Inoue and Stephen Muggleton for discussions on this work.

{\small
\bibliography{jelia19_tourret_cropper}

\begin{thebibliography}{10}

\bibitem{albarghouthi2017constraint}
A.~Albarghouthi, P.~Koutris, M.~Naik, and C.~Smith.
\newblock Constraint-based synthesis of datalog programs.
\newblock In J.~C. Beck, editor, {\em Principles and Practice of Constraint
  Programming - 23rd International Conference, {CP} 2017, Proceedings}, volume
  10416 of {\em Lecture Notes in Computer Science}, pages 689--706. Springer,
  2017.

\bibitem{DBLP:books/daglib/0041477}
F.~Baader, I.~Horrocks, C.~Lutz, and U.~Sattler.
\newblock {\em An Introduction to Description Logic}.
\newblock Cambridge University Press, 2017.

\bibitem{DBLP:journals/corr/abs-1002-4286}
J.~L. Balc{\'{a}}zar.
\newblock Redundancy, deduction schemes, and minimum-size bases for association
  rules.
\newblock {\em Logical Methods in Computer Science}, 6(2), 2010.

\bibitem{bienvenu2007prime}
M.~Bienvenu.
\newblock Prime implicates and prime implicants in modal logic.
\newblock In {\em Proceedings of the {National} {Conference} on {Artificial}
  {Intelligence}}, volume~22, page 379. Menlo Park, CA; Cambridge, MA; London;
  AAAI Press; MIT Press, 2007.

\bibitem{buntine1988generalized}
W.~Buntine.
\newblock Generalized subsumption and its applications to induction and
  redundancy.
\newblock {\em Artificial intelligence}, 36(2):149--176, 1988.

\bibitem{andreas}
A.~{Campero}, A.~{Pareja}, T.~{Klinger}, J.~{Tenenbaum}, and S.~{Riedel}.
\newblock {Logical Rule Induction and Theory Learning Using Neural Theorem
  Proving}.
\newblock {\em ArXiv e-prints}, Sept. 2018.

\bibitem{charalambidis2013extensional}
A.~Charalambidis, K.~Handjopoulos, P.~Rondogiannis, and W.~W. Wadge.
\newblock Extensional higher-order logic programming.
\newblock {\em {ACM} Trans. Comput. Log.}, 14(3):21:1--21:40, 2013.

\bibitem{crop:thesis}
A.~Cropper.
\newblock {\em Efficiently learning efficient programs}.
\newblock PhD thesis, Imperial College London, {UK}, 2017.

\bibitem{crop:minmeta}
A.~Cropper and S.~H. Muggleton.
\newblock Logical minimisation of meta-rules within meta-interpretive learning.
\newblock In J.~Davis and J.~Ramon, editors, {\em Inductive Logic Programming -
  24th International Conference, {ILP} 2014, Revised Selected Papers}, volume
  9046 of {\em Lecture Notes in Computer Science}, pages 62--75. Springer,
  2014.

\bibitem{crop:metafunc}
A.~Cropper and S.~H. Muggleton.
\newblock Learning higher-order logic programs through abstraction and
  invention.
\newblock In S.~Kambhampati, editor, {\em Proceedings of the Twenty-Fifth
  International Joint Conference on Artificial Intelligence, {IJCAI} 2016, New
  York, NY, USA, 9-15 July 2016}, pages 1418--1424. {IJCAI/AAAI} Press, 2016.

\bibitem{metagol}
A.~Cropper and S.~H. Muggleton.
\newblock Metagol system.
\newblock https://github.com/metagol/metagol, 2016.

\bibitem{cropper2018derivation}
A.~Cropper and S.~Tourret.
\newblock Derivation reduction of metarules in meta-interpretive learning.
\newblock In {\em Inductive Logic Programming - 28th International Conference,
  {ILP} 2018, Ferrara, Italy, September 2-4, 2018, Proceedings}, pages 1--21,
  2018.

\bibitem{datsin:datalog}
E.~Dantsin, T.~Eiter, G.~Gottlob, and A.~Voronkov.
\newblock Complexity and expressive power of logic programming.
\newblock {\em {ACM} Comput. Surv.}, 33(3):374--425, 2001.

\bibitem{dirac1952some}
G.~A. Dirac.
\newblock Some theorems on abstract graphs, 1952.

\bibitem{echenim2015quantifierfree}
M.~Echenim, N.~Peltier, and S.~Tourret.
\newblock Quantifier-free equational logic and prime implicate generation.
\newblock In {\em {CADE}-25}, pages 311--325. Springer, 2015.

\bibitem{evans2017learning}
R.~Evans and E.~Grefenstette.
\newblock Learning explanatory rules from noisy data.
\newblock {\em J. Artif. Intell. Res.}, 61:1--64, 2018.

\bibitem{gottlob1993removing}
G.~Gottlob and C.~G. Ferm{\"u}ller.
\newblock Removing redundancy from a clause.
\newblock {\em Artificial Intelligence}, 61(2):263--289, 1993.

\bibitem{gradel1991expressive}
E.~Gr{\"{a}}del.
\newblock The expressive power of second order horn logic.
\newblock In {\em {STACS} 91, 8th Annual Symposium on Theoretical Aspects of
  Computer Science, Proceedings}, pages 466--477, 1991.

\bibitem{DBLP:conf/ijcai/HemaspaandraS11}
E.~Hemaspaandra and H.~Schnoor.
\newblock Minimization for generalized boolean formulas.
\newblock In {\em IJCAI Proceedings-International Joint Conference on
  Artificial Intelligence}, volume~22, page 566, 2011.

\bibitem{heule2015clause}
M.~Heule, M.~J{\"a}rvisalo, F.~Lonsing, M.~Seidl, and A.~Biere.
\newblock Clause elimination for {SAT} and {QSAT}.
\newblock {\em Journal of Artificial Intelligence Research}, 53:127--168, 2015.

\bibitem{hillenbrand2013search}
T.~Hillenbrand, R.~Piskac, U.~Waldmann, and C.~Weidenbach.
\newblock From search to computation: Redundancy criteria and simplification at
  work.
\newblock In A.~Voronkov and C.~Weidenbach, editors, {\em Programming Logics -
  Essays in Memory of Harald Ganzinger}, volume 7797 of {\em Lecture Notes in
  Computer Science}, pages 169--193. Springer, 2013.

\bibitem{DBLP:journals/corr/HoheneckerL17}
P.~Hohenecker and T.~Lukasiewicz.
\newblock Deep learning for ontology reasoning.
\newblock {\em CoRR}, abs/1705.10342, 2017.

\bibitem{huet1973mechanization}
G.~P. Huet.
\newblock A mechanization of type theory.
\newblock In {\em Proceedings of the 3rd International Joint Conference on
  Artificial Intelligence}, pages 139--146, 1973.

\bibitem{descriptive-complexity}
N.~Immerman.
\newblock {\em Descriptive complexity}.
\newblock Springer Science \& Business Media, 2012.

\bibitem{DBLP:journals/jacm/Joyner76}
W.~H. {Joyner Jr.}
\newblock Resolution strategies as decision procedures.
\newblock {\em J. {ACM}}, 23(3):398--417, 1976.

\bibitem{inoue:mil-hex}
T.~Kaminski, T.~Eiter, and K.~Inoue.
\newblock Exploiting answer set programming with external sources for
  meta-interpretive learning.
\newblock In {\em 34th International Conference on Logic Programming}, 2018.

\bibitem{sld-resolution}
R.~A. Kowalski.
\newblock Predicate logic as programming language.
\newblock In {\em {IFIP} Congress}, pages 569--574, 1974.

\bibitem{langlois2009combinatorial}
M.~Langlois, D.~Mubayi, R.~Sloan, and G.~Tur{\'a}n.
\newblock Combinatorial problems for {Horn} clauses.
\newblock {\em Graph Theory, Computational Intelligence and Thought}, pages
  54--65, 2009.

\bibitem{liberatore1}
P.~Liberatore.
\newblock Redundancy in logic {I:} {CNF} propositional formulae.
\newblock {\em Artif. Intell.}, 163(2):203--232, 2005.

\bibitem{liberatore2}
P.~Liberatore.
\newblock Redundancy in logic {II:} 2{CNF} and {Horn} propositional formulae.
\newblock {\em Artif. Intell.}, 172(2-3):265--299, 2008.

\bibitem{lloyd:logiclearning}
J.~Lloyd.
\newblock {\em Logic for Learning}.
\newblock Springer, Berlin, 2003.

\bibitem{marquis2000consequence}
P.~Marquis.
\newblock Consequence finding algorithms.
\newblock In {\em Handbook of {Defeasible} {Reasoning} and {Uncertainty}
  {Management} {Systems}}, pages 41--145. Springer, 2000.

\bibitem{DBLP:conf/mi/McCarthy95}
J.~McCarthy.
\newblock Making robots conscious of their mental states.
\newblock In {\em Machine Intelligence 15, Intelligent Agents [St. Catherine's
  College, Oxford, July 1995]}, pages 3--17, 1995.

\bibitem{mugg:ilp}
S.~Muggleton.
\newblock Inductive {L}ogic {P}rogramming.
\newblock {\em New Generation Computing}, 8(4):295--318, 1991.

\bibitem{mugg:progol}
S.~Muggleton.
\newblock Inverse entailment and {P}rogol.
\newblock {\em New Generation Computing}, 13:245--286, 1995.

\bibitem{mugg:metagold}
S.~Muggleton, D.~Lin, and A.~Tamaddoni-Nezhad.
\newblock Meta-interpretive learning of higher-order dyadic datalog: Predicate
  invention revisited.
\newblock {\em Machine Learning}, 100(1):49--73, 2015.

\bibitem{mlj:ilp20}
S.~Muggleton, L.~D. Raedt, D.~Poole, I.~Bratko, P.~A. Flach, K.~Inoue, and
  A.~Srinivasan.
\newblock {ILP} turns 20 - biography and future challenges.
\newblock {\em Machine Learning}, 86(1):3--23, 2012.

\bibitem{ilp:book}
S.-H. Nienhuys-Cheng and R.~d. Wolf.
\newblock {\em Foundations of Inductive Logic Programming}.
\newblock Springer-Verlag New York, Inc., Secaucus, NJ, USA, 1997.

\bibitem{plotkin:thesis}
G.~Plotkin.
\newblock {\em Automatic Methods of Inductive Inference}.
\newblock PhD thesis, Edinburgh University, August 1971.

\bibitem{raedt:decbias}
L.~D. Raedt.
\newblock Declarative modeling for machine learning and data mining.
\newblock In {\em Algorithmic Learning Theory - 23rd International Conference,
  {ALT}, Proceedings}, page~12, 2012.

\bibitem{raedt:clint}
L.~D. Raedt and M.~Bruynooghe.
\newblock Interactive concept-learning and constructive induction by analogy.
\newblock {\em Machine Learning}, 8:107--150, 1992.

\bibitem{robinson:resolution}
J.~A. Robinson.
\newblock A machine-oriented logic based on the resolution principle.
\newblock {\em J. {ACM}}, 12(1):23--41, 1965.

\bibitem{rock:e2eprolog}
T.~Rockt{\"{a}}schel and S.~Riedel.
\newblock End-to-end differentiable proving.
\newblock In I.~Guyon, U.~von Luxburg, S.~Bengio, H.~M. Wallach, R.~Fergus,
  S.~V.~N. Vishwanathan, and R.~Garnett, editors, {\em Advances in Neural
  Information Processing Systems 30: Annual Conference on Neural Information
  Processing Systems 2017}, pages 3791--3803, 2017.

\bibitem{sato1992equivalence}
T.~Sato.
\newblock Equivalence-preserving first-order unfold/fold transformation
  systems.
\newblock {\em Theor. Comput. Sci.}, 105(1):57--84, 1992.

\bibitem{alps}
X.~Si, W.~Lee, R.~Zhang, A.~Albarghouthi, P.~Koutris, and M.~Naik.
\newblock Syntax-guided synthesis of datalog programs.
\newblock In G.~T. Leavens, A.~Garcia, and C.~S. Pasareanu, editors, {\em
  Proceedings of the 2018 {ACM} Joint Meeting on European Software Engineering
  Conference and Symposium on the Foundations of Software Engineering,
  {ESEC/SIGSOFT} {FSE} 2018}, pages 515--527. {ACM}, 2018.

\bibitem{tarnlund:hornclause}
S.~T{\"{a}}rnlund.
\newblock Horn clause computability.
\newblock {\em {BIT}}, 17(2):215--226, 1977.

\bibitem{wang2014structure}
W.~Y. Wang, K.~Mazaitis, and W.~W. Cohen.
\newblock Structure learning via parameter learning.
\newblock In {\em Proceedings of the 23rd ACM International Conference on
  Conference on Information and Knowledge Management}, pages 1199--1208. ACM,
  2014.

\bibitem{weidenbach2010subterm}
C.~Weidenbach and P.~Wischnewski.
\newblock Subterm contextual rewriting.
\newblock {\em {AI} Commun.}, 23(2-3):97--109, 2010.

\end{thebibliography}
\bibliographystyle{abbrv}
}
\end{document}